\renewcommand{\orcidID}[1]{\href{https://orcid.org/#1}{\includegraphics[scale=.03]{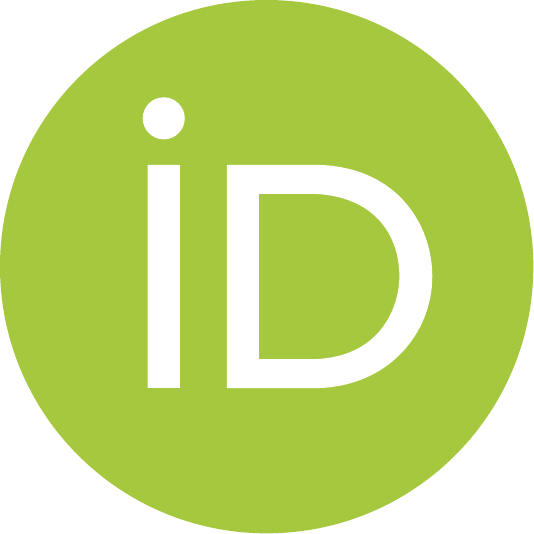}}}
\newtheorem{observation}[theorem]{Observation}
\newtheorem{claimx}[theorem]{Claim}
\newtheorem{lemmax}[theorem]{Lemma}
\def\qed{\ifmmode\squareforqed\else{\unskip\nobreak\hfil
\penalty50\hskip1em\null\nobreak\hfil\squareforqed
\parfillskip=0pt\finalhyphendemerits=0\endgraf}\fi}
\author{
    Carlos Alegr\'{i}a\inst{}\orcidID{0000-0001-5512-5298} \and
    Susanna Caroppo\orcidID{0009-0001-4538-8198} \and 
    Giordano {Da Lozzo}\orcidID{0000-0003-2396-5174} \and
    Marco {D'Elia}\orcidID{0009-0008-6266-3324} \and \\
    Giuseppe {Di Battista}\orcidID{0000-0003-4224-1550} \and
    Fabrizio Frati\orcidID{0000-0001-5987-8713} \and
    Fabrizio Grosso\orcidID{0000-0002-5766-4567} \and
    Maurizio Patrignani\orcidID{0000-0001-9806-7411}}
\institute
{
  Department of Engineering, Roma Tre University, Rome, Italy\\
  \email{carlos.alegria@uniroma3.it},
  \email{susanna.caroppo@uniroma3.it},
  \email{marco.delia@uniroma3.it},
  \email{giordano.dalozzo@uniroma3.it},
  \email{giuseppe.dibattista@uniroma3.it},
  \email{fabrizio.frati@uniroma3.it},
  \email{maurizio.patrignani@uniroma3.it},\\
  CeDiPa, Perugia University, Perugia, Italy\\ \email{fabrizio.grosso@unipg.it},
}
\authorrunning{Alegr{\'{\i}}a et al.}
\Crefname{property}{Property}{Properties}
\Crefname{observation}{Observation}{Observations}
\Crefname{theorem}{Theorem}{Theorems}
\Crefname{section}{Section}{Sections}
\Crefname{figure}{Figure}{Figures}
\Crefname{claimx}{Claim}{Claims}
\Crefname{figure}{Fig.}{Figs.}
\Crefname{section}{Sec.}{Sec.}
\let\doendproof\endproof
\renewcommand\endproof{~\hfill\qed\doendproof}
\newcommand{\UPSE}{{\sc UPSE}\xspace}
\newcommand{\UPSEs}{{\sc UPSE}s\xspace}
\newlength{\RoundedBoxWidth}
\newsavebox{\GrayRoundedBox}
\newenvironment{GrayBox}[1]%
   {\setlength{\RoundedBoxWidth}{.93\columnwidth}
    \def\boxheading{#1}
    \begin{lrbox}{\GrayRoundedBox}
       \begin{minipage}{\RoundedBoxWidth}}%
   {   \end{minipage}
    \end{lrbox}
    \begin{center}
    \begin{tikzpicture}%
       \node(Text)[draw=black!20,fill=white,rounded corners,inner xsep=2ex,inner ysep=2ex,text width=\RoundedBoxWidth]
             {\usebox{\GrayRoundedBox}};
        \coordinate(x) at (current bounding box.north west);
        \node [draw=white,rectangle,inner sep=3pt,anchor=north west,fill=white]
        at ($(x)+(6pt,.75em)$) {\boxheading};
    \end{tikzpicture}
    \end{center}}
\newenvironment{defproblemx}[2]{\noindent\ignorespaces%
                                \FrameSep=6pt%
                                \parindent=6pt
                \vspace{-3mm}            
                \begin{GrayBox}{#1}%
                \begin{tabular*}{\columnwidth}{!{\extracolsep{\fill}}@{\hspace{.1em}} >{\itshape} p{#2} p{0.84\columnwidth} @{}}%
            }{\\[-1.5ex]
                \end{tabular*}%
                \end{GrayBox}%
                \ignorespacesafterend
                \vspace{-4mm}
            }
\newcommand{\problemQuestion}[3]{%
  \begin{defproblemx}{#1}{1.5cm}
    Input: & #2 \\
    Question: & #3
  \end{defproblemx}
}
\DeclareMathOperator{\conv}{\mathcal{CH}}
\DeclareMathOperator{\convR}{\mathcal{H}_R}
\DeclareMathOperator{\convL}{\mathcal{H}_L}
\definecolor{defblue}{rgb}{0.121,0.47,0.705}
\definecolor{darkred}{rgb}{.5,0.27,0.27}
\DeclareTextFontCommand{\emph}{\color{defblue}\em}
\definecolor{lipicsblue}{rgb}{0.08235294118,0.3098039216,0.537254902}
\definecolor{linkblue}{rgb}{0.098,0.098,0.4392}
\definecolor{ourgreen}{rgb}{0.509,0.745,0.235}
\definecolor{ourgreen2}{rgb}{0.409,0.600,0.200}
\definecolor{indianred}{rgb}{0.804,0.361,0.361}
\definecolor{indianred1}{rgb}{1,0.416,0.416}
\definecolor{indianred3}{rgb}{0.804,0.333,0.333}
\definecolor{orangered}{rgb}{1,0.271,0}
\definecolor{coral1}{rgb}{1,0.447,0.337}
\definecolor{rosybrown2}{rgb}{0.933,0.231,0.231}
\definecolor{aquamarine4}{rgb}{0.271,0.545,0.455}
\definecolor{chartreuse3}{rgb}{0.4,0.804,0}
\definecolor{mediumpurple3}{rgb}{0.537,0.408,0.804}
\definecolor{mediumvioletred}{rgb}{0.78,0.082, 0.522}
\newcommand{\bigO}{\mathcal{O}}
\newcommand{\Tparam}[1]{\ensuremath{T[\hspace{0.3mm}{#1}\hspace{0.3mm}]}}
\newcommand{\Qparam}[1]{\ensuremath{Q[\hspace{0.3mm}{#1}\hspace{0.3mm}]}}
\date{}
\title{Upward Pointset Embeddings of Planar $st$-Graphs\thanks{This research was supported, in part,  by MUR of Italy (PRIN Project no.~2022ME9Z78~-- NextGRAAL and PRIN Project no.~2022TS4Y3N~-- EXPAND). A preliminary version of this work appeared in~\cite{DBLP:conf/gd/AlegriaCLDBFGP24}.}
}
\begin{document}

\maketitle

\begin{abstract}
We study upward pointset embeddings (\UPSEs) of planar $st$-graphs. 
Let $G$ be a planar $st$-graph and let  $S \subset \mathbb{R}^2$ be a pointset with $|S|= |V(G)|$.
An {\em UPSE} of $G$ on $S$ is an upward planar straight-line drawing of $G$ that maps the vertices of $G$ to the points of $S$.
We consider both the problem of testing the existence of an \UPSE of $G$ on $S$ ({\sc UPSE Testing}) and the problem of enumerating all \UPSEs of $G$ on $S$.
We prove that {\sc UPSE Testing} is \NP-complete even for $st$-graphs that consist of a set of directed $st$-paths sharing only $s$ and $t$.
On the other hand, if $G$ is an $n$-vertex planar $st$-graph whose maximum $st$-cutset has size $k$, then {\sc UPSE Testing} can be solved in $\bigO(n^{4k})$ time with $\bigO(n^{3k})$ space; also, all the \UPSEs of $G$ on $S$ can be enumerated with $\bigO(n)$ worst-case delay, using $\bigO(k  n^{4k} \log n)$ space, after $\bigO(k  n^{4k} \log n)$ set-up time.
Moreover, for an $n$-vertex $st$-graph whose underlying graph is a cycle, we provide a necessary and sufficient condition for the existence of an \UPSE on a given pointset, which can be tested in $\bigO(n \log n)$ time. Related to this result, we give an algorithm that, for a set $S$ of $n$ points,  enumerates all the non-crossing monotone Hamiltonian cycles on $S$ with $\bigO(n)$ worst-case delay, using $\bigO(n^2)$ space, after $\bigO(n^2)$ set-up time.
\end{abstract}

\section{Introduction}
Given an $n$-vertex upward planar graph $G$ and a set $S$ of $n$ points in the plane, an \emph{upward pointset embedding} ({\sc UPSE}) of $G$ on $S$ is an upward planar drawing of $G$ where the vertices are mapped to the points of $S$ and the edges are represented as straight-line segments.
The {\sc Upward Pointset Embeddability Testing Problem} ({\sc UPSE Testing}) asks whether an upward planar graph $G$ has an UPSE on a given pointset $S$. 

Pointset embedding problems are classic challenges in Graph Drawing and have been considered for both undirected and directed graphs. For an undirected graph, a \emph{pointset embedding} ({\sc PSE}) has the same definition of an UPSE, except that the drawing must be planar, rather than upward planar.
The {\sc Pointset Embeddability Testing Problem} ({\sc PSE Testing}) asks whether a planar graph has a PSE on a given pointset~$S$. Pointset embeddings have been studied by several authors. It is known that a graph admits a PSE on {\em every} pointset in general position if and only if it is outerplanar~\cite{castaneda1996straight,gritzmann1991e3341}; such a PSE can be constructed efficiently~\cite{DBLP:conf/gd/Bose97,bose2002-outerplanar,DBLP:conf/gd/BoseMS95,bose1997-trees}.  {\sc PSE Testing} is, in general, \NP-complete~\cite{cabello2006planar}, however it is polynomial-time solvable if the input graph is a planar $3$-tree~\cite{DBLP:conf/gd/NishatMR10,nishat2012point}. More in general, a polynomial-time algorithm for {\sc PSE Testing} exists if the input graph has a fixed embedding, bounded treewidth, and bounded face size~\cite{biedl2012point}. PSE becomes \NP-complete if one of the latter two conditions does not hold. PSEs have been studied also for dynamic graphs~\cite{dibattista2022small}.

The literature on UPSEs is not any less rich than the one on PSEs. From a combinatorial perspective, the directed graphs with an UPSE on a one-sided convex pointset have been characterized~\cite{binucci2010upward,DBLP:journals/siamcomp/HeathPT99}; all directed trees are among them. Conversely, there exist directed trees that admit no UPSE on certain convex pointsets~\cite{binucci2010upward}. Directed graphs that admit an UPSE on any convex pointset, but not on any pointset in general position, exist~\cite{angelini2010upward}. It is still unknown whether every digraph whose underlying graph is a path admits an UPSE on every pointset in general position, see, e.g.,~\cite{DBLP:journals/comgeo/Mchedlidze13}. UPSEs where bends along the edges are allowed have been studied in~\cite{binucci2010upward,digiacomo20241,giordano2015bends,DBLP:conf/gd/KaufmannW99,kaufmann2002bends}. From the computational complexity point of view~\cite{DBLP:conf/gd/KaufmannMS11,kaufmann2013upward}, it is known that {\sc UPSE Testing} is \NP-hard, even for planar $st$-graphs and $2$-convex pointsets, and that {\sc UPSE Testing} can be solved in polynomial time if the given pointset is convex.

{\bf Our contributions.} We tackle {\sc UPSE Testing} for planar $st$-graphs. Planar $st$-graphs constitute an important class of upward planar graphs; indeed, it is known that every upward planar graph is a subgraph of a planar $st$-graph~\cite{DBLP:journals/tcs/BattistaT88}. Let $G$ be an $n$-vertex planar $st$-graph and $S$ be a set of $n$ points in the plane.  We adopt the common assumption in the context of upward pointset embeddability, see e.g.~\cite{angelini2010upward,binucci2010upward,DBLP:conf/gd/KaufmannMS11,kaufmann2013upward}, that no two points of $S$ lie on the same horizontal line. 
Our results are the following:
\begin{itemize}
\item In \cref{se:hardness}, we show that {\sc UPSE Testing} is \NP-hard even if $G$ consists of a set of internally-disjoint $st$-paths (\cref{th:st-hardness}). A similar proof shows that {\sc UPSE Testing} is \NP-hard for directed trees consisting of a set of directed root-to-leaf paths (\cref{th:tree-hardness}). This answers an open question from~\cite{arseneva2021upward} and strengthens a result therein, which shows \NP-hardness for directed trees with multiple sources and with a prescribed mapping for a vertex. 
\item In \cref{sec:k-paths}, we show that {\sc UPSE Testing} can be solved in $\bigO{(n^{4k})}$ time and $\bigO(n^{3k})$ space, where $k$ is the size of the largest $st$-cutset of $G$ (\cref{th:k-paths}). This parameter measures the ``fatness'' of the digraph and coincides with the length of the longest directed path in the dual~\cite{DBLP:journals/tcs/BattistaT88}. By leveraging on the techniques developed for the UPSE testing algorithm, we also show how to enumerate all \UPSEs of $G$ on $S$ with $\bigO(n)$ worst-case delay, using $\bigO(k  n^{4k} \log n)$ space, after $\bigO(k  n^{4k} \log n)$ set-up time (\cref{th:st-enumeration}). Similarly to previous algorithms for pointset embeddings~\cite{biedl2012point,kaufmann2013upward}, our algorithms are based on dynamic programming; however, our algorithms employ an explicit correspondence between a structure in the graph (an $st$-cutset) and a structure in the pointset (a cut defined by a horizontal line), which might be of interest.
\item In \cref{sec:2-paths}, we provide a simple characterization of the pointsets in general position that allow for an UPSE of $G$, if $G$ consists of two (internally-disjoint) $st$-paths. Based on that, we provide an $\bigO(n\log n)$ testing algorithm for this case (\cref{th:two-paths}). Previously, a characterization of the directed graphs admitting an UPSE on a given pointset was known only if the pointset is one-sided convex~\cite{binucci2010upward,DBLP:journals/siamcomp/HeathPT99}.
\item Finally, in \cref{se:geometry}, inspired by the fact that an UPSE of a planar $st$-graph composed of two $st$-paths defines a non-crossing monotone Hamiltonian cycle on $S$, we provide an algorithm that enumerates all the non-crossing monotone Hamiltonian cycles on a given pointset with $\bigO(n)$ worst-case delay, and $\bigO(n^2)$ space usage and set-up time (\cref{th:hamiltonian-enumeration}). 
\end{itemize}

Concerning our last result, we remark that a large body of research has considered problems related to enumerating and counting non-crossing structures on a given pointset~\cite{DBLP:journals/dcg/AlvarezBCR15,DBLP:journals/dam/CheonCES22,DBLP:journals/dm/FlajoletN99,DBLP:conf/compgeom/MarxM16,DBLP:conf/birthday/RazenW11}. Despite this effort, the complexity of counting the non-crossing Hamiltonian cycles, often called \emph{polygonalizations}, remains open~\cite{DBLP:journals/dcg/Eppstein20a,DBLP:conf/compgeom/MarxM16,DBLP:journals/ijcga/MitchellO01}. However, it is possible to enumerate all polygonalizations of a given pointset in singly-exponential time~\cite{DBLP:journals/jocg/Wettstein17,DBLP:journals/dam/YamanakaAHOUY21}. Recently, an algorithm has been shown~\cite{eppstein2023non} to enumerate all polygonalizations of a given pointset in time polynomial in the output size, i.e., bounded by a polynomial in the number of solutions. However, an enumeration algorithm with  polynomial (in the input size) delay is not yet known, neither in the worst-case nor in the average-case acception. Our enumeration algorithm achieves this goal for the case of monotone polygonalizations. 

We also remark that the enumeration of graph drawings has been recently considered in~\cite{efficient-enumeration}.

\section{Preliminaries}\label{sec:preliminaries}

We use standard terminology in graph theory~\cite{Diestelbook} and graph drawing~\cite{BattistaETT99}. For an integer $k>0$, let $[k]$ denote the set $\{1,\dots,k\}$. A \emph{permutation with repetitions} of $k$ elements from $U$ is an arrangement of any $k$ elements of a set $U$, where repetitions are allowed.

For a point $p \in \mathbb{R}^2$, we denote by $x(p)$ and $y(p)$ the $x$- and $y$-coordinate of $p$, respectively.
The \emph{convex hull} $\conv(S)$ of a set $S$ of points in ${\mathbb R}^2$ is the union of all convex combinations of points in $S$. The \emph{boundary} $ \mathcal{B}(S)$ of $\conv(S)$ is the polygon with minimum perimeter enclosing~$S$.
The points of $S$ with lowest and highest $y$-coordinates are the \emph{south} and \emph{north extreme} of~$S$, respectively; we also refer to them as to the \emph{extremes} of~$S$.
The \emph{left envelope} of $S$ is the subpath~$\mathcal{E}_L(S)$ of $\mathcal{B}(S)$ that lies to the left of the line passing through the extremes of $S$; it includes the extremes of $S$. The \emph{right envelope} $\mathcal{E}_R(S)$ of $S$ is defined analogously. We denote the subset of $S$ in $\mathcal{E}_L(S)$ and in $\mathcal{E}_R(S)$ by $\convL(S)$ and $\convR(S)$, respectively.
A \emph{polyline} $(p_1,\dots, p_k)$, with $k\geq 2$, is a chain of straight-line segments.

We call \emph{ray} any of the two half-lines obtained by cutting a straight line at any of its points, which is the \emph{starting point} of the ray. A ray is \emph{upward} if it passes through points whose $y$-coordinate is larger than the one of the starting point of the ray. We denote by $\rho(p,q)$ the ray starting at a point $p$ and passing through a point $q$. For a set of points $S$ and a point~$p$ whose $y$-coordinate is smaller than the one of every point in $S$, we denote by $\rho(p,S)$ the rightmost upward ray starting at $p$ and passing through a point of $S$. That is,  the clockwise  rotation around $p$ which brings $\rho(p,S)$ to coincide with any other upward ray starting at $p$ and passing through a point of $S$ is larger than $180^\circ$. Analogously, we denote by $\ell(p,S)$ the leftmost upward ray starting at $p$ and passing through a point of $S$.

A polyline $(p_1,\dots,p_k)$ is \emph{$y$-monotone} if $y(p_i)<y(p_{i+1})$, for $i=1,\dots,k-1$. A \emph{monotone path} on a pointset $S$ is a $y$-monotone polyline $(p_1,\dots,p_k)$ such that the points $p_1,\dots,p_k$ belong to $S$. A \emph{monotone cycle} on $S$ consists of two monotone paths on $S$ that share their endpoints. A \emph{monotone Hamiltonian cycle} $(p_1,\dots,p_k,p_1)$ on $S$ is a monotone cycle on $S$ such that each point of $S$ is a point $p_i$ (and vice versa). 

A path $(v_1,\dots,v_k)$ is \emph{directed} if, for $i=1,\dots,k-1$, the edge $(v_i,v_{i+1})$ is directed from $v_i$ to $v_{i+1}$; the vertices $v_2,\dots,v_{k-1}$ are \emph{internal}. A \emph{planar $st$-graph} is an acyclic digraph with one source $s$ and one sink $t$, which admits a planar embedding in which $s$ and $t$ are on the boundary of the outer face. An \emph{$st$-path} in a planar $st$-graph is a directed path from $s$ to $t$. A drawing of a directed graph is \emph{straight-line} if each edge is represented by a straight-line segment, it is \emph{planar} if no two edges cross, and it is \emph{upward} if every edge is represented by a Jordan arc 
monotonically increasing along the $y$-axis from the tail to the head. A digraph that admits an upward planar drawing is an \emph{upward planar graph}. Every upward planar graph admits an upward planar straight-line drawing~\cite{DBLP:journals/tcs/BattistaT88}. An \emph{Upward Pointset Embedding} (\emph{UPSE}, for short) of an upward planar graph~$G$ on a pointset $S$ is an upward planar straight-line drawing of~$G$ that maps each vertex of~$G$ to a point in $S$. In this paper, we study the following problem.

\problemQuestion{\sc Upward Pointset Embeddability Testing Problem (UPSE Testing)}%
{An $n$-vertex upward planar graph $G$ and a pointset $S \subset \mathbb{R}^2$ with $|S| = n$.}%
{Does there exist an \UPSE of $G$ on $S$?}  
\medskip

In the remainder, we assume that not all points in $S$ lie on the same line, as otherwise there is an UPSE if and only if the input is a directed path. Recall that no two points in $S$ have the same $y$-coordinate. Unless otherwise specified, we do not require points to be in \emph{general position}, i.e., we allow three or more points to lie on the same line.

\section{NP-Completeness of UPSE Testing}\label{se:hardness}

In this section we prove that {\sc UPSE Testing} is \NP-complete. The membership in \NP~is obvious, as one can non-deterministically assign the vertices of the input graph $G$ to the points of the input pointset $S$ and then test in polynomial time whether the assignment results in an upward planar straight-line drawing of $G$. In the remainder of the section, we prove that {\sc UPSE Testing} is \NP-hard even in very restricted cases.

We first show a reduction from \textsc{3-Partition} to instances of UPSE in which the input is a planar $st$-graph composed of a set of internally-disjoint $st$-paths. An instance of \textsc{3-Partition} consists of a set $A =\{a_1, \dots, a_{3b}\}$ of $3b$ integers, where $\sum_{i=1}^{3b} a_i = bB$ and $B/4 \leq a_i \leq B/2$, for $i = 1, \dots, 3b$.
The \textsc{3-Partition} problem asks whether $A$ can be partitioned into $b$ subsets $A_1, \dots, A_b$, each with three integers, so that the sum of the integers in each set $A_i$ is $B$. 
For example, an instance of \textsc{3-Partition} might be a set $A=\{2,2,2,2,2,2,3,3,3,3,4,4\}$, with $B=8$ and $b=4$. The instance is positive, as certified by the sets $A_1=\{2,2,4\}$, $A_2=\{2,2,4\}$, $A_3=\{2,3,3\}$, and $A_4=\{2,3,3\}$. Since \textsc{3-Partition} is strongly \NP-hard~\cite{garey1979computers}, we may assume that $B$ is bounded by a polynomial function of $b$.
Given an instance $A$ of \textsc{3-Partition}, we show how to construct in polynomial time, precisely $\bigO(b\cdot B)$, an equivalent instance $(G, S)$ of \mbox{\sc UPSE Testing}. 

The $n$-vertex planar $st$-graph $G$ is composed of $4b+1$ internally-disjoint $st$-paths. Namely, for $i = 1, \dots, 3b$, we have that $G$ contains an \emph{$a_i$-path}, i.e., a path with $a_i$ internal vertices, and $b+1$ additional $k$-paths, where $k=2B+1$. Note that $n=2+(b+1)k+\sum_{i=1}^{3b} a_i =2+(b+1)k+bB$.

The points of $S$ lie on the plane as follows (see \cref{fig:hardness-titto-a}):
\begin{itemize}
    \item $p_1$ is the origin, with coordinates $(0,0)$.
    \item Consider $b+1$ upward rays $\rho_1,\dots,\rho_{b+1}$, whose starting point is $p_1$, such that the angles $\alpha_1,\dots,\alpha_{b+1}$ that they respectively form with the $x$-axis satisfy $3\pi/4 > \alpha_1 > \dots > \alpha_{b+1} > \pi/4$. Let $\ell$ be a line intersecting all the rays, with a positive slope smaller than $\pi/4$. For $j=1, \dots, b+1$, place $k$ points $p_{j,1},\dots,p_{j,k}$ (in this order from bottom to top) along $\rho_j$, so that $p_{j,k}$ is on $\ell$ and no two points share the same $y$-coordinate. Observe that $p_{b+1,k}$ is the highest point placed so far.
    \item Place $p_n$ at coordinates $(0, 10\cdot y(p_{b+1,k}))$. 
    \item Finally, for $j=1, \dots, b$, place $B$ points along a non-horizontal segment $s_j$ in such a way that: (i) $s_j$ is entirely contained in the triangle with vertices $p_{j,k}$, $p_{j+1,k}$, and $p_n$, (ii) for any point $p$ on $s_j$, the polygonal line $\overline{p_1p}\cup \overline{pp_n}$ is contained in the region $R_j$ delimited by the polygon $\overline{p_1p_{j,k}}\cup \overline{p_{j,k}p_n}\cup \overline{p_n p_{j+1,k}}\cup \overline{p_{j+1,k} p_1}$, and (iii) no two distinct points on any two segments $s_i$ and $s_j$ share the same $y$-coordinate. 
\end{itemize}
Note that $S$ has $2+(b+1)k+bB=n$ points. 
This reduction is the key ingredient in proving the following theorem.

\begin{figure}[tb!]
    \centering
    \begin{subfigure}{0.48\textwidth}
    \centering
        \includegraphics[width=\textwidth, page=8]{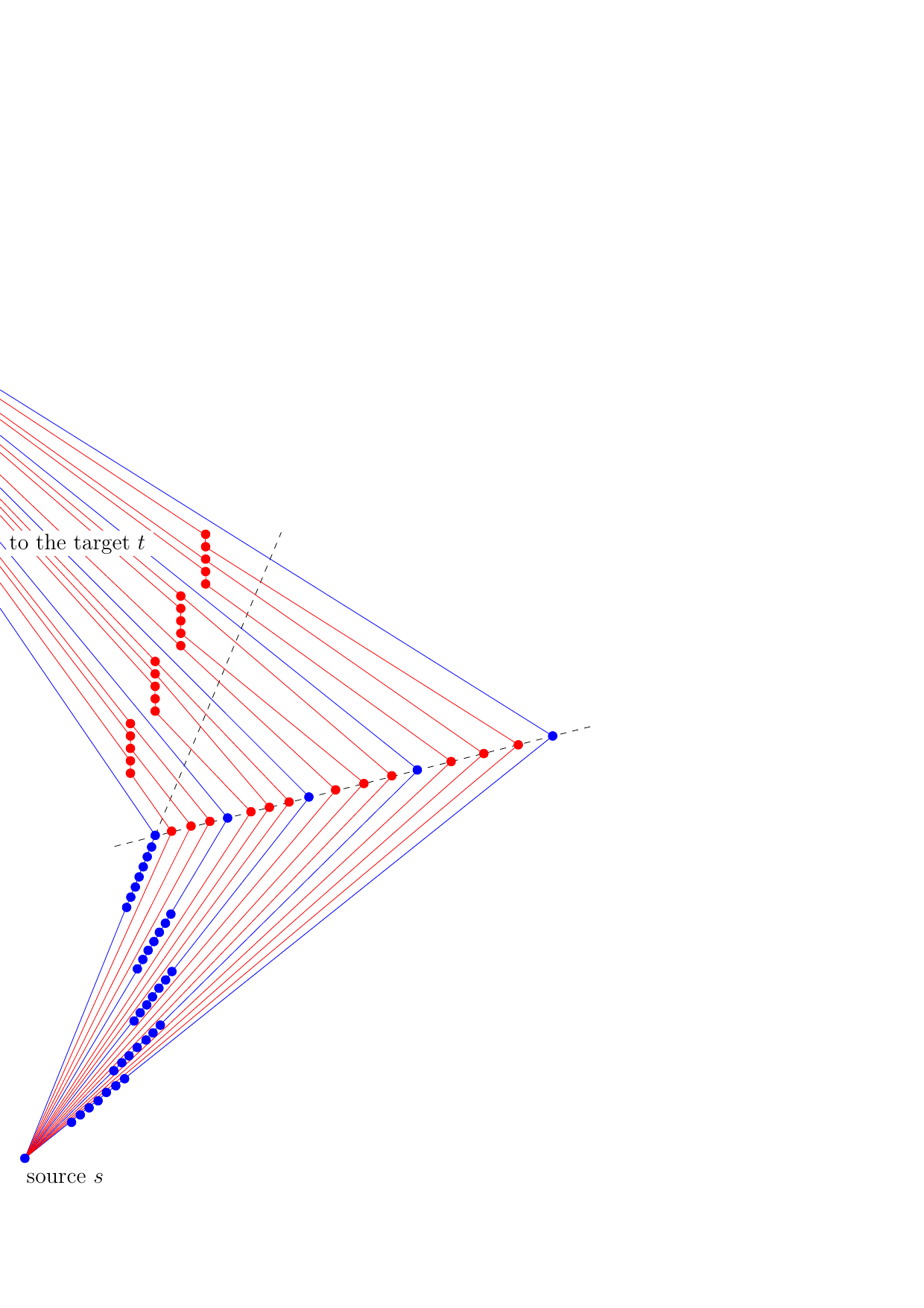}
        \subcaption{}\label{fig:hardness-titto-a}
    \end{subfigure}
    \hfill
    \begin{subfigure}{0.48\textwidth}
     \centering
        \includegraphics[width=\textwidth, page=7]{img/3-partition-titto.pdf}
        \subcaption{}\label{fig:hardness-titto-b}
    \end{subfigure}
    \caption{Illustration for the proof of \cref{th:st-hardness}. (a) The pointset $S$. (b) The UPSE of $G$ on $S$, where the $a_i$-paths are drawn in red and the additional $k$-paths are in blue. The pointset $S$ and the graph $G$ are those resulting from the reduction applied to the instance $A=\{2,2,2,2,2,2,3,3,3,3,4,4\}$.}
    \label{fig:hardness-titto}
\end{figure}

\begin{theorem} \label{th:st-hardness}
{\sc UPSE Testing} is \NP-hard even for planar $st$-graphs consisting of a set of directed internally-disjoint $st$-paths. 
\end{theorem}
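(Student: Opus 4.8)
The plan is to establish the equivalence between the \textsc{3-Partition} instance $A$ and the constructed \UPSE instance $(G,S)$: we must show that $A$ is a yes-instance if and only if $G$ admits an \UPSE on $S$. The geometric intuition behind the reduction is that in any \UPSE, the $st$-paths of $G$ must be nested inside one another (because $G$ is a planar $st$-graph drawn upward, and the $st$-paths are internally disjoint, so they partition the drawing into nested regions). The $b+1$ many $k$-paths are ``rigid'': each one has exactly $k$ internal vertices, and the construction places exactly $k$ points on each ray $\rho_j$, so that the only way to route $b+1$ internally-disjoint $st$-paths of length $k+1$ through $S$ using $2+(b+1)k$ of the points (leaving $p_1=s$, $p_n=t$, and the $bB$ points on the segments $s_j$) is to assign the internal vertices of the $j$-th $k$-path to the points on $\rho_j$. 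The remaining $b$ ``gaps'' between consecutive $k$-paths, i.e.\ the regions $R_j$, must then each accommodate exactly three of the $a_i$-paths, whose total number of internal vertices is exactly $B$ — which matches the $B$ points placed on the segment $s_j$ inside $R_j$. Since $B/4 \le a_i \le B/2$, exactly three $a_i$-paths fit in each gap, and the partition they induce is a solution to \textsc{3-Partition}.

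First I would argue the ``only if'' direction, which contains the main structural work. I would fix an arbitrary \UPSE $\Gamma$ of $G$ on $S$ and establish, as a sequence of claims: (1) the source $s$ must map to $p_1$ and the sink $t$ to $p_n$, using that $p_1$ is the unique lowest point and $p_n$ the unique highest point and that $s,t$ are the unique source/sink; (2) the $st$-paths of $G$ appear in $\Gamma$ in a left-to-right nested order, so they determine a linear order on the $4b+1$ paths and carve the region between the leftmost and rightmost path into $4b$ nested ``strips''; (3) a counting/geometry argument forcing the internal vertices of each $k$-path onto a distinct ray $\rho_j$: because the points on a ray $\rho_j$ together with $p_1$ and $p_n$ form a convex-position-like configuration, routing a path with fewer than $k$ internal vertices through the $k$ points of some $\rho_j$ while keeping upward planarity and leaving room for the other paths leads to a contradiction via the pigeonhole principle on the $(b+1)k$ ray-points versus the $(b+1)$ $k$-paths; (4) consequently each of the $b$ gaps between consecutive $k$-paths contains the points of exactly one segment $s_j$ (by condition (ii) of the construction, which confines the polygonal path $\overline{p_1 p}\cup\overline{p p_n}$ through any point $p$ on $s_j$ to $R_j$), hence must be traversed by a sub-collection of $a_i$-paths whose internal vertices exactly use the $B$ points on $s_j$; (5) a subset of $\{a_1,\dots,a_{3b}\}$ summing to $B$ with each element in $[B/4,B/2]$ has size exactly $3$, so reading off the gaps yields the desired partition $A_1,\dots,A_b$.

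For the ``if'' direction I would do the reverse construction: given a partition $A_1,\dots,A_b$, I would place $s$ at $p_1$, $t$ at $p_n$, route the $j$-th $k$-path through the points $p_{j,1},\dots,p_{j,k}$ on $\rho_j$, and, for each gap $R_j$, route the three $a_i$-paths with $a_i\in A_j$ through the $B$ points on segment $s_j$ (assigning internal vertices to the points of $s_j$ in the natural bottom-to-top order, split according to the three path lengths, which sum to $B$). I would then verify this is an \UPSE: upwardness is immediate since within each ray and each segment the points are sorted by $y$-coordinate and the reduction places $p_1$ below and $p_n$ above everything; planarity follows from the nesting of the regions $R_1,\dots,R_b$ and from condition (ii), which guarantees that each routed path stays inside its designated region and hence cannot cross a path in another region, while two paths inside the same region are drawn through points in convex position and hence can be realized without crossings by the standard fact that straight-line segments between consecutive points of a convex arc are non-crossing. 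Finally I would note that the construction has size $\bigO(b\cdot B)$, which is polynomial since \textsc{3-Partition} is strongly \NP-hard and we may take $B$ polynomial in $b$; together with the \NP membership already observed, this proves \NP-hardness (and, with membership, \NP-completeness) for the stated restricted class.

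The main obstacle I expect is step (3) of the ``only if'' direction: rigorously forcing the internal vertices of the $k$-paths onto the rays. The counting alone (there are $(b+1)k$ ``ray points'' and $(b+1)$ paths each needing $k$ internal vertices) shows that if \emph{no} internal vertex of a non-$k$-path lands on a ray point and every ray point is used, then each $k$-path uses exactly the points of one ray; but one must also rule out an $a_i$-path sneaking an internal vertex onto a ray point, or a $k$-path spreading its internal vertices across two rays. This requires exploiting the specific geometry — the angular ordering $3\pi/4 > \alpha_1 > \cdots > \alpha_{b+1} > \pi/4$, the near-horizontal line $\ell$ capping the rays, the placement of $p_n$ far above at $(0, 10\,y(p_{b+1,k}))$, and the containment regions $R_j$ — to show that any upward planar straight-line path through a ray point $p_{j,h}$ is ``trapped'' between $\rho_j$-rays in a way that forbids the offending configurations. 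I would isolate this as a dedicated geometric lemma before assembling the main argument.
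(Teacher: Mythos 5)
Your reduction, your accounting of sizes, and your forward direction coincide with the paper's proof, and your backward direction has the right skeleton (your claims (1)--(5) are exactly the claims the paper establishes). The problem is that the decisive step --- your claim (3), forcing the $b+1$ $k$-paths to occupy exactly the $(b+1)k$ ray points --- is the one place where you stop short of an argument: you correctly identify it as the main obstacle, list the geometric features that ought to be relevant, and then defer everything to an unwritten ``dedicated geometric lemma.'' As written, the proof is therefore incomplete precisely at its core, and the two failure modes you yourself raise (an $a_i$-path sneaking an internal vertex onto a ray point; a $k$-path straddling two rays) are flagged but not ruled out.

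The missing idea is a short visibility argument that makes claim (3) almost immediate. Because the points $p_{j,1},\dots,p_{j,k}$ are collinear along $\rho_j$, a straight-line edge from $p_1$ to $p_{j,i}$ with $i\ge 2$ would pass through the points $p_{j,1},\dots,p_{j,i-1}$, each of which hosts a vertex of the drawing --- impossible in a planar straight-line drawing. Moreover the points on the segments $s_1,\dots,s_b$ (and $p_n$) are placed above the ray points, so by upwardness no path can visit one of them and later descend to a ray point. Hence every $st$-path whose drawing uses any ray point must have one of the $b+1$ points $p_{1,1},\dots,p_{b+1,1}$ as its first internal vertex, so at most $b+1$ paths touch the $(b+1)k$ ray points; since each path has at most $k$ internal vertices and the $a_i$-paths have only $a_i\le B/2<k=2B+1$ of them, all $b+1$ of these paths must be $k$-paths with every internal vertex on a ray point. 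Planarity then pins each one to a single ray: ordering the $k$-paths $P_1,\dots,P_{b+1}$ left to right around $p_1$, if some $P_j$ used a point of $\rho_h$ with $h>j$, two of $P_j,\dots,P_{b+1}$ would cross. With this lemma in place your claims (4) and (5) go through as you describe, so the gap, though genuine, is localized and fixable.
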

\begin{proof}
First, the construction of $G$ and $S$ takes polynomial time. In particular, the coordinates of the points in $S$ can be encoded with a polylogarithmic number of bits. In order to prove the NP-hardness, it remains to show that the constructed instance $(G,S)$ of {\sc UPSE Testing} is equivalent to the given instance $A$ of \textsc{3-Partition}. Refer to \cref{fig:hardness-titto-b}. 

First, suppose that $A$ is a positive instance of \textsc{3-Partition}, that is, there exist sets $A_1,\dots,A_b$, each with three integers, such that the sum of the integers in each set $A_j$ is $B$. We construct an UPSE of $G$ on $S$ as follows. We map $s$ to $p_1$ and $t$ to $p_n$. For $j=1,\dots,b+1$, we map the $k$ internal vertices of a $k$-path to the points $p_{j,1}, \dots p_{j,k}$, so that vertices that come first in the directed path have smaller $y$-coordinates. Furthermore, for $j=1,\dots,b$, let $A_j=\{a_{j_1},a_{j_2},a_{j_3}\}$. Then we map the $a_{j_1}$ internal vertices of an $a_{j_1}$-path, the $a_{j_2}$ internal vertices of an $a_{j_2}$-path, and the $a_{j_3}$ internal vertices of an $a_{j_3}$-path to the set of $B$ points in the triangle with vertices $p_{j,k}$, $p_{j+1,k}$, and $p_n$, so that vertices that come first in the directed paths have smaller $y$-coordinates and so that the internal vertices of the $a_{j_1}$-path have smaller $y$-coordinates than the internal vertices of the $a_{j_2}$-path, which have smaller $y$-coordinates than the internal vertices of the $a_{j_3}$-path. This results in an UPSE  of $G$ on $S$.

Second, suppose that $(G,S)$ is a positive instance of {\sc UPSE Testing}. Trivially, in any UPSE of $G$ on $S$, we have that $s$ is drawn on $p_1$ and $t$ on $p_n$. Consider the points $p_{1,1}, \dots p_{{b+1},1}$. The paths using them use all the $(b+1)k$ points $p_{j,i}$, with $j = 1, \dots, b+1$ and $i=1,\dots,k$. Indeed, if these paths left one of such points unused, no other path could reach it from $s$ without passing through $p_{1,1}, \dots p_{{b+1},1}$, because of the collinearity of the points along the rays $\rho_1,\dots,\rho_{b+1}$. Hence, there are at most $b+1$ paths that use $(b+1)k$ points. Since the $a_i$-paths have less than $k$ internal vertices, these $b+1$ paths must all be $k$-paths. Let $P_1,\dots,P_{b+1}$ be the left-to-right order of the $k$-paths around~$p_1$. For $j=1,\dots,b+1$, path $P_j$ uses all points $p_{j,i}$ on $\rho_j$, as if $P_j$ used a point $p_{h,i}$ with $h>j$, then two among $P_j,\dots,P_{b+1}$ would cross each other. Note that, after using $p_{j,k}$, path $P_j$ ends with the segment $\overline{p_{j,k} p_n}$. Hence, for $j = 1, \dots, b$, the region $R_j$ is bounded by $P_j$ and $P_{j+1}$; recall that $R_j$ contains the segment $s_j$. The $a_i$-paths must then use the points on $s_1,\dots,s_b$. Since $B/4 < a_i < B/2$, no two $a_i$-paths can use all the $B$ points in one region and no four $a_i$-paths can lie in the same region. Hence, three $a_i$-paths use the $B$ points in each region, and this provides a solution to the given \textsc{3-Partition} instance.
\end{proof}

\begin{figure}[tb!]
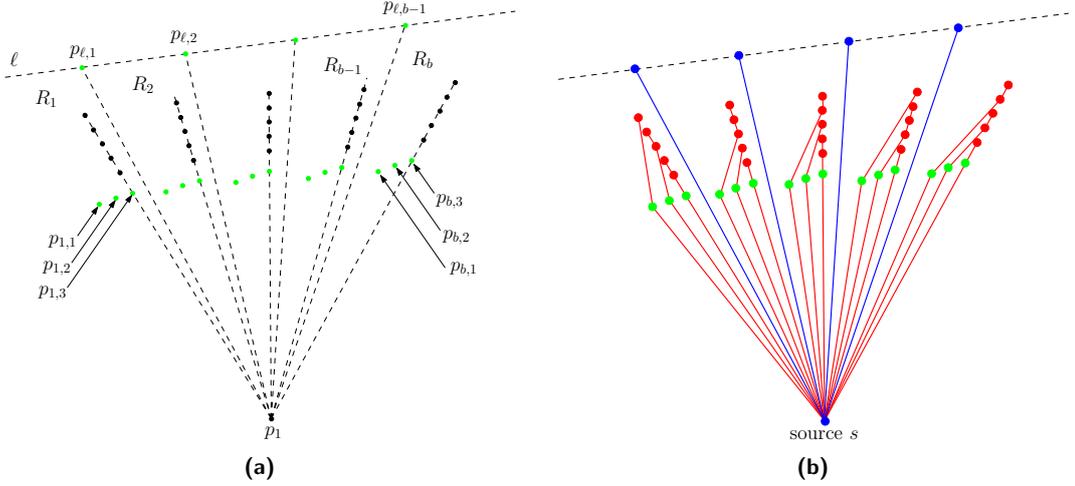

    \centering
    \begin{subfigure}{0.48\textwidth}
    \centering
        \includegraphics[width=\textwidth, page=9]{img/3-partition-titto.pdf}
        \subcaption{}\label{fig:hardness-tree-a}
    \end{subfigure}
    \hfill
    \begin{subfigure}{0.48\textwidth}
     \centering
        \includegraphics[width=\textwidth, page=10]{img/3-partition-titto.pdf}
        \subcaption{}\label{fig:hardness-tree-b}
    \end{subfigure}
    \caption{Illustration for the proof of \cref{th:tree-hardness}. (a) The pointset $S$. The points of $S$ visible from $p_1$ (green points) are as many as the children of the root of the tree $T$. The portions of the regions $R_1,R_2,\dots,R_b$ below the line $\ell$ are alternately colored gray and white.  
    (b) The UPSE of $T$ on~$S$ corresponding to a solution to the original instance  \textsc{3-partition} (red vertices).}
    \label{fig:hardness-tree}
\end{figure}

We next reduce the \textsc{3-Partition} problem to the instances of \textsc{UPSE testing} in which the input is a directed tree consisting of a set of root-to-leaf  paths. Consider an instance of \textsc{3-Partition} consisting of a set $A =\{a_1, \dots, a_{3b}\}$ of $3b$ integers, where $\sum_{i=1}^{3b} a_i = bB$ and $B/4 \leq a_i \leq B/2$, for $i = 1, \dots, 3b$. We construct a directed tree $T$ as follows. The root $s$ of $T$ has $4b-1$ children. Among them, $b-1$ are leaves $v_{1}, \dots, v_{b-1}$, while each of the remaining $3b$ children is the first vertex of a directed path $P_i$, for $i = 1, \dots, 3b$, consisting of the $a_i+1$ vertices $v_{i,1}, v_{i,2}, \dots, v_{i,a_i+1}$, where $v_{i,1}$ is the child of $s$ and $v_{i,a_i+1}$ is a leaf. All the edges of $T$ are directed from the root $s$ to the leaves. Note that the number of vertices of $T$ is $n=1+(b-1)+\sum_{i=1}^{3b}(a_i+1)=b(B+4)$.
The points of $S$ lie on the plane as follows (see \cref{fig:hardness-tree-a}):
\begin{itemize}
    \item $p_1$ is the origin, with coordinates $(0,0)$.
    \item Consider $b-1$ upward rays $\rho_1,\dots,\rho_{b-1}$, whose starting point is $p_1$, such that the angles $\alpha_1,\dots,\alpha_{b-1}$ formed by $\rho_1,\dots,\rho_{b-1}$ with the $x$-axis satisfy $3\pi/4 > \alpha_1 > \dots > \alpha_{b-1} > \pi/4$. These rays split the half plane above the $x$-axis into $b$ regions $R_j$, with $j=1,2, \dots, b$. In the interior of each region $R_j$, place three points $p_{j,1}, p_{j,2},$ and $p_{j,3}$ in such a way that $p_{j,1}$ is lower than $p_{j,2}$, which is lower than $p_{j,3}$, and so that they are all visible from $s$. Along the line passing through $s$ and $p_{j,3}$ place $B$ points above $p_{j,3}$.  
    \item Let $y_{m}$ be the highest $y$-coordinate used so far. 
    Let $\ell$ be a line with positive slope smaller than $\pi/4$ intersecting all the rays $\rho_1,\dots,\rho_{b-1}$ at points that have $y$-coordinates larger than $y_{m}$. For $j=1, \dots, b-1$, place a point $p_{\ell,j}$ at the intersection of $\rho_j$ with~$\ell$.
\end{itemize}

Note that $S$ has $1+3b+bB+(b-1)=b(B+4)=n$ points. This reduction is the key ingredient in proving the following theorem.

\begin{theorem} \label{th:tree-hardness}
{\sc UPSE Testing} is \NP-hard even for directed trees consisting of a set of directed root-to-leaf paths. 
\end{theorem}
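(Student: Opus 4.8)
The plan is to establish that the constructed instance $(T,S)$ is equivalent to the \textsc{3-Partition} instance $A$, following the scheme of the proof of \cref{th:st-hardness}. As there, the construction clearly takes polynomial time and all point coordinates admit a polylogarithmic-size encoding, since $B$ is polynomial in $b$ and all the rays and lines involved can be chosen with small rational coordinates. The substance of the proof is therefore the two-way correspondence between solutions of \textsc{3-Partition} and \UPSEs of $T$ on $S$.

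For the (easy) direction, given a partition $A_1,\dots,A_b$ with $A_j=\{a_{j_1},a_{j_2},a_{j_3}\}$ summing to $B$, I would map $s$ to $p_1$, the leaves $v_1,\dots,v_{b-1}$ to $p_{\ell,1},\dots,p_{\ell,b-1}$, and, for each $j$, draw the three paths $P_{j_1},P_{j_2},P_{j_3}$ inside $R_j$: their first vertices go to $p_{j,1},p_{j,2},p_{j,3}$, and their remaining $a_{j_1}+a_{j_2}+a_{j_3}=B$ vertices are placed on the $B$ collinear points stacked above $p_{j,3}$, partitioned into three consecutive blocks, with the lowest block assigned to the path rooted at $p_{j,3}$, the middle block to the one rooted at $p_{j,2}$, and the highest block to the one rooted at $p_{j,1}$. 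Since those $B$ points lie on the line through $s$ and $p_{j,3}$ while $p_{j,1}$ and $p_{j,2}$ do not, each of the three paths is a $y$-monotone chain meeting that line only at its topmost edge endpoint; a generic, mutually close placement of $p_{j,1},p_{j,2},p_{j,3}$ in $R_j$ then makes these chains, together with the edges incident to $s$, pairwise non-crossing and entirely contained in $R_j$, yielding an \UPSE of $T$ on $S$ because the regions $R_1,\dots,R_b$ are interiorly disjoint.

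For the converse, fix an \UPSE of $T$ on $S$. As $s$ is the unique source, it lies below all other vertices and is thus drawn on $p_1$. Since the drawing is planar and straight-line, every child of $s$ is drawn on a point $q$ such that no point of $S$ lies in the interior of $\overline{p_1 q}$; by construction there are exactly $4b-1$ such points — the $3b$ points $p_{j,i}$ and the $b-1$ points $p_{\ell,j}$ — because the $bB$ points stacked above the $p_{j,3}$ are hidden behind them. As $s$ has exactly $4b-1$ children, they are drawn bijectively onto these points. I would then observe that no path-start $v_{i,1}$ can sit on any $p_{\ell,j}$: its successor $v_{i,2}$ would have to lie strictly above $p_{\ell,j}$, but the only points of $S$ above some $p_{\ell,j}$ are the remaining $p_{\ell,j'}$, already occupied by children of $s$, whereas $v_{i,2}$ is not a child of $s$. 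Hence $v_1,\dots,v_{b-1}$ occupy $p_{\ell,1},\dots,p_{\ell,b-1}$ and the $3b$ path-starts occupy the points $p_{j,i}$. The edges $s\to v_m$ are then drawn along the rays $\rho_1,\dots,\rho_{b-1}$, and since every vertex of every path $P_i$ is drawn on a point of $y$-coordinate at most $y_m$ while each ray reaches height $y_m$ only within the drawn segment $\overline{p_1 p_{\ell,j}}$, no $y$-monotone path $P_i$ can cross a ray without crossing one of these edges. Therefore each $P_i$ is confined to the region $R_j$ of its start, so $R_j$ contains exactly three complete paths whose $3+(a_{j_1}+a_{j_2}+a_{j_3})$ vertices must cover exactly the $3+B$ points lying in $R_j$; hence $a_{j_1}+a_{j_2}+a_{j_3}=B$ for every $j$, which is the desired partition.

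I expect the main obstacle to be the routing in the forward direction: one has to check that three $y$-monotone chains inside a single region $R_j$, each starting at one of $p_{j,1},p_{j,2},p_{j,3}$ and finishing on a consecutive block of the $B$ collinear points above $p_{j,3}$, can be drawn together with the edges leaving $s$ without crossings and without a vertex falling on a non-incident edge. This is precisely what forces a careful, generic choice of the points $p_{j,1},p_{j,2},p_{j,3}$ within $R_j$, mirroring the careful placement of the segments $s_j$ in the proof of \cref{th:st-hardness}.
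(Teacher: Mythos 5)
Your proposal is correct and follows essentially the same route as the paper's proof: the same forward mapping (first vertices of the three paths of a triple to the three visible points of $R_j$, the remaining $B$ vertices to the stacked collinear points in nested consecutive blocks) and the same converse argument (the $4b-1$ children of $s$ must occupy the $4b-1$ points visible from $p_1$, the leaves must take the points $p_{\ell,j}$, and the long segments $\overline{p_1 p_{\ell,j}}$ confine each path $P_i$ to a single region, forcing the partition by counting). Your variant for why the $p_{\ell,j}$ must host leaves (a path-start there would have no available point above it) is equivalent to the paper's iterative ``highest remaining point must be a leaf'' argument, and your explicit caveat about choosing $p_{j,1},p_{j,2},p_{j,3}$ generically so the three nested chains do not cross is a legitimate refinement of a detail the paper leaves to the figure.
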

\begin{proof}
First, the construction of $T$ and $S$ takes polynomial time. In particular, the coordinates of the points in $S$ can be encoded with a polylogarithmic number of bits. In order to prove the NP-hardness, it remains to show that the constructed instance $(T,S)$ of {\sc UPSE Testing} is equivalent to the given instance $A$ of \textsc{3-Partition}. Refer to \cref{fig:hardness-tree-b}. 

First, suppose that $A$ is a positive instance of \textsc{3-Partition}, that is, there exist sets $A_1,\dots,A_b$, each with three integers, such that the sum of the integers in each set $A_j$ is $B$. We construct an UPSE of $G$ on $S$ as follows. We map $s$ to $p_1$. For $j=1,\dots,b-1$, we map the child $v_j$ of $s$ to $p_{\ell,j}$. Furthermore, for $j=1,\dots,b$, let $A_j=\{a_{j_1},a_{j_2},a_{j_3}\}$. Then we map the $a_{j_1}$ internal vertices of an $a_{j_1}$-path, the $a_{j_2}$ internal vertices of an $a_{j_2}$-path, and the $a_{j_3}$ internal vertices of an $a_{j_3}$-path to the set of $B$ points in the region $R_j$, so that the neighbors of $s$ in the $a_{j_1}$-path, in the $a_{j_2}$-path, and in the $a_{j_3}$-path lie on $p_{j,1}$, $p_{j,2}$, and $p_{j,3}$, respectively, so that vertices that come first in the directed paths have smaller $y$-coordinates, and so that the internal vertices of the $a_{j_1}$-path have larger $y$-coordinates than the internal vertices of the $a_{j_2}$-path, which have larger $y$-coordinates than the internal vertices of the $a_{j_3}$-path. This results in an UPSE of $T$ on $S$.

Second, suppose that $(T,S)$ is a positive instance of {\sc UPSE Testing}. It is obvious that the root $s$ of $T$ has to be placed on $p_1$. From the root $s$ only $4b-1$ points are visible. These are the points $p_{\ell,j}$, for $j=1, \dots, b-1$, and the points $p_{h,1}, p_{h,2},$ and $p_{h,3}$, for $h=1, \dots, b$ (all these points are filled green in \cref{fig:hardness-tree-a}).  
Since $T$ has $4b-1$ children, each child must use one of the above points. Consider point $p_{\ell,b-1}$. Since this is the highest point in the set $S$, the child that uses it must be a leaf. This also holds for $p_{\ell,b-2}$, which is the highest of the remaining points. Iterating this argument we have that the points $p_{\ell,j}$, with $j=1, \dots, b-1$, must be used by the $b-1$ children of $s$ which are leaves of $T$. Since all other vertices have smaller $y$-coordinates, each path $P_i$, with $i=1, \dots, 3m$, is constrained to be into a region $R_j$, with $j=1, \dots, b$ (see \cref{fig:hardness-tree-b}). Since each region $R_j$ contains exactly three points $p_{j,1}, p_{j,2},$ and $p_{j,3}$ visible from $s$, each region hosts exactly three such paths, which use the remaining $B$ points, and this provides a solution to the given \textsc{3-Partition} instance.
\end{proof}

\section[Algorithms for Planar st-Graphs]{UPSE Testing and Enumerating UPSEs for Planar st-Graphs with Maximum st-Cutset of Bounded Size}\label{sec:k-paths}

An \emph{$st$-cutset} of a planar $st$-graph $G = (V,E)$ is a subset $W$ of $E$ such that: 
\begin{itemize}
    \item removing $W$ from $E$ results in a graph consisting of exactly two connected components $C_s$ and $C_t$,
    \item $s$ belongs to $C_s$ and $t$ belongs to $C_t$, and
    \item any edge in $W$ has its tail in $C_s$ and its head in $C_t$. 
\end{itemize}
In this section, we consider instances $(G,S)$ where $G$ is a planar $st$-graph, whose maximum $st$-cutset has bounded size $k$. In \cref{th:k-paths}, we show that {\sc UPSE Testing} can be solved in polynomial time for such instances $(G,S)$. Moreover, in \cref{th:st-enumeration}, we show how to enumerate all UPSEs of $(G,S)$ with linear delay. 
The algorithm for \cref{th:k-paths} is based on a dynamic programming approach. 
It exploits the property that, for an $st$-cutset $W$ defining the connected components $C_s$ and $C_t$, the extensibility of an UPSE $\Gamma'$ of $C_s \cup W$ on a subset $S'$ of $S$ to an UPSE of $G$ on $S$ only depends on the drawing of the edges of $W$, and not on the embedding of the remaining vertices~of~$C_s$, provided that in $\Gamma'$ there exists an horizontal line that crosses all the edges of $W$.
The algorithm for \cref{th:st-enumeration} leverages a variation of the dynamic programming table computed by the former algorithm to efficiently test the extensibility of an UPSE of $C_s \cup W$ (in which there exists a horizontal line that crosses all the edges of $W$) on a subset $S'$ of $S$ to an UPSE of $G$ on $S$.

The proofs of \cref{th:k-paths,th:st-enumeration} exploit two dynamic programming tables $T$ and $Q$ defined as follows. Each entry of $T$ and $Q$ is indexed by a {\em key} that consists of a set of $h \leq k$ triplets $\langle e_i,p_i,q_i\rangle$, where, for any $i = 1,\dots,h$, it holds that $e_i \in E(G)$, $p_i,q_i \in S$, and $y(p_i) < y(q_i)$.
Moreover, each key $\chi = \bigcup^h_{i=1} \langle e_i,p_i,q_i\rangle$ satisfies the following constraints:
\begin{itemize}
\item the set $E(\chi) =\bigcup^h_{i=1} e_i$ is an $st$-cutset of $G$ and, for every $i,j$, with $i \neq j$, it holds true that $e_i \neq e_j$ (that is, $|E(\chi)| = h$);
\item for every $i,j$, with $i \neq j$, it holds true that $p_i = p_j$ (resp.\ that $q_i = q_j$) if and only if $e_i$ and $e_j$ have the same tail (resp.\ the same head); and
\item let $\ell_\chi$ be the horizontal line passing through the tail with largest $y$-coordinate among the edges in  $E(\chi)$, i.e., $\ell_\chi:=y=y(p_i)$ such that $y(p_j)\leq y(p_i)$ for any $\langle e_j,p_j,q_j\rangle \in \chi$; then $\ell_\chi$ intersects all the segments $\overline{p_j q_j}$, possibly at an endpoint.
\end{itemize}
For brevity, we sometimes say that the edge $e_i$ has its tail (resp.\ its head) \emph{mapped by} $\chi$ on $p_i$ (resp.\ on $q_i$). We also say that $e_i$ is \emph{drawn as in} $\chi$ if its drawing is the segment $\overline{p_i q_i}$.

Let $\chi = \bigcup^h_{i=1} \langle e_i, p_i, q_i\rangle$ be a key of $T$ and of $Q$; see \cref{fig:ST-digraphs-inductive-chi}. Let $G_{\chi}$ be the connected component containing $s$ of the graph obtained from $G$ by removing the edge set $E(\chi)$.

The entry $\Tparam{\chi}$ contains a Boolean value such that $\Tparam{\chi} = \texttt{True}$ if and only if there exists an \UPSE of $G_{\chi}^{+}=G_{\chi} \cup E(\chi)$ on some subset $S' \subset S$ with $|S'| = |V(G_{\chi}^{+})|$ such that:
\begin{itemize}
\item the lowest point $p_s$ of $S$ belongs to $S'$ and $s$ lies on it, and 
\item for $i=1,\dots,h$, the edge $e_i$ is drawn as in $\chi$.
\end{itemize}
If $\Tparam{\chi}=\texttt{False}$, the entry $\Qparam{\chi}$ contains the empty set $\varnothing$. If $\Tparam{\chi}=\texttt{True}$ and $E(\chi)$ coincides with the set of edges incident to $s$, then $\Qparam{\chi}$ stores the set $\{\bot\}$. If $\Tparam{\chi}=\texttt{True}$ and $E(\chi)$ does not coincide with the set of edges incident to $s$, $\Qparam{\chi}$ stores the set $\Phi$ of keys with the following properties. Let $e_\tau$ be any edge whose tail $v_\tau$ has maximum $y$-coordinate among the edges in $E(\chi)$, i.e., $\langle e_\tau,p_\tau,q_\tau\rangle$ is such that $y(p_\tau)\geq y(p_j)$ for any $\langle e_j,p_j,q_j\rangle \in \chi$.
For each $\varphi \in \Phi$, we have that:
\begin{itemize}
\item $\Tparam{\varphi} = \texttt{True}$;
\item $E(\chi) \cap E(\varphi)$ contains all and only the edges in $E(\chi)$ whose tail is not $v_\tau$, and each edge $e_i \in E(\chi) \cap E(\varphi)$ is drawn in $\varphi$ as it is drawn in $\chi$; and
\item all the edges in $E(\varphi) \setminus E(\chi)$ have $v_\tau$ as their head.
\end{itemize}

\begin{figure}[b!]
    \centering
    \begin{subfigure}[b]{0.45\textwidth}
    \centering
        \includegraphics[width=.7\textwidth, page=1]{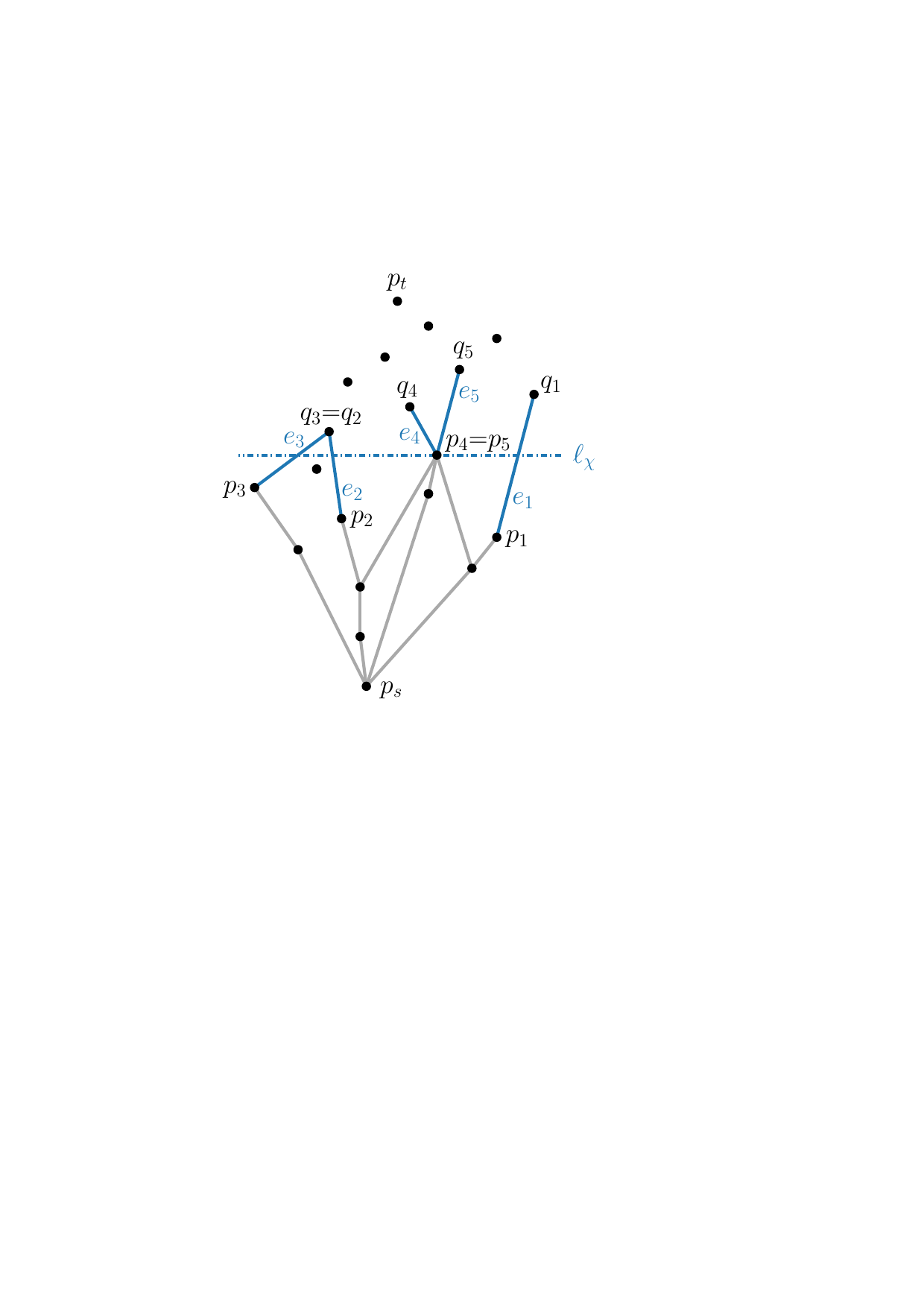}
        \subcaption{}
        \label{fig:ST-digraphs-inductive-chi}
    \end{subfigure}
    \hfill
    \begin{subfigure}[b]{0.45\textwidth}
    \centering
        \includegraphics[width=.7\textwidth, page=3]{img/ST-digraphs-inductive-case.pdf}
        \subcaption{}
        \label{fig:ST-digraphs-inductive-phi-true}
    \end{subfigure}
    \caption{(a) An entry $\chi = \bigcup^5_{i=1} \langle e_i,p_i,q_i \rangle$ with $\Tparam{\chi} = \texttt{True}$ and a corresponding UPSE of $G_\chi$ on a subset of $S$ that includes $p_s$. The edges in $E(\chi)$ are colored blue.     (b) An entry $\varphi$ from which $\chi$ stems; the points in $S_\downarrow$ are filled white. The edges in $H^-$ are colored green, while the edges in $H^+$ are colored orange.}
    \label{fig:ST-digraphs-inductive}
\end{figure}

Additionally, we store a list $\Lambda$ of the keys $\sigma$ such that $\Tparam{\sigma} = \texttt{True}$ and $E(\sigma)$ is the set of edges incident to $t$. Note that each edge in  $E(\sigma)$ has its head mapped by $\sigma$ to the point $p_t \in S$ with largest $y$-coordinate.

We use dynamic programming to compute the entries of $T$ and $Q$ in increasing order of $|V(G_{\chi})|$.
By the definition of $T$, we have that $G$ admits an \UPSE on $S$ if and only if $\Lambda \neq \varnothing$.

First, we initialize all entries of $T$ to $\texttt{False}$ and all entries of $Q$ to $\varnothing$.

If $|V(G_{\chi})|=1$, then $G_{\chi}$ only consists of $s$. We set $\Tparam{\chi} = \texttt{True}$ and $\Qparam{\chi}=\{\bot\}$ for every key $\chi = \bigcup^h_{i=1} \langle e_i,p_i,q_i\rangle$ such that:
\begin{itemize}
\item $e_1,\dots,e_h$ are the edges incident to $s$;
\item $p_1=\dots=p_h=p_s$; and
\item for every distinct $i$ and $j$ in $\{1,\dots,h\}$, we have that $p_s$, $q_i$, and $q_j$ are not aligned.
\end{itemize}

If $|V(G_{\chi})|>1$, we compute $\Tparam{\chi}$ and $\Qparam{\chi}$ as follows, see \cref{fig:ST-digraphs-inductive-phi-true}. If two segments $\overline{p_i q_i}$ and $\overline{p_j q_j}$, with $i \neq j$, cross (that is, they share a point that is internal for at least one of the segments), then we leave $\Tparam{\chi}$ and $\Qparam{\chi}$ unchanged; in particular, $\Tparam{\chi} = \texttt{False}$ and $\Qparam{\chi} = \varnothing$.
Otherwise, we proceed as follows. 
 Let $e_\tau$ be any edge whose tail $v_\tau$ has maximum $y$-coordinate among the edges in $E(\chi)$. Let $H^-$ be the set of edges obtained from $E(\chi)$ by removing all the edges having $v_\tau$ as their tail, and let $H^+$ be the set of edges of $G$ having $v_\tau$ as their head.
We define the set $H := H^- \cup H^+$. We have the following.

\begin{figure}[tb]
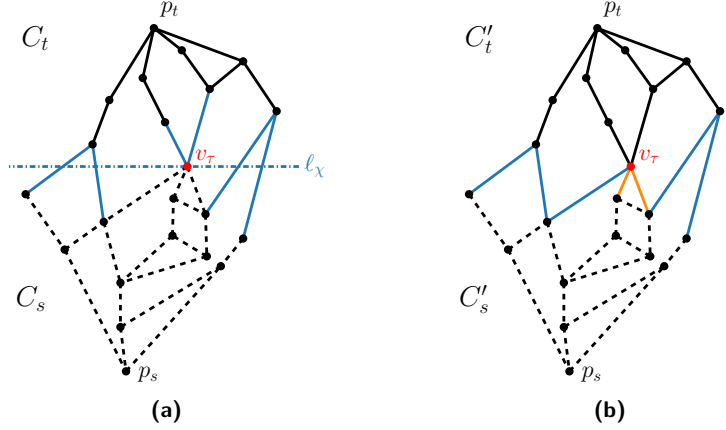

    \centering
    \begin{subfigure}[b]{0.3\textwidth}
    \centering
        \includegraphics[width=\textwidth, page=4]{img/ST-digraphs-inductive-case.pdf}
        \subcaption{}
        \label{fig:ST-digraphs-inductive-cutset-B}
    \end{subfigure}
    \hspace{0.1\textwidth}
    \begin{subfigure}[b]{0.3\textwidth}
    \centering
        \includegraphics[width=\textwidth, page=5]{img/ST-digraphs-inductive-case.pdf}
        \subcaption{}
        \label{fig:ST-digraphs-inductive-cutset-B-minus}
    \end{subfigure}
    \caption{Illustrations for \cref{cl:st-cutset}. (a) The connected components $C_s$ (dashed) and $C_t$ (solid black) defined by the $st$-cuteset $E(\chi)$. (b) The connected components $C'_s$ (dashed) and $C'_t$ (solid black) defined by the $st$-cuteset $H$ (blue and orange edges).
    }
    \label{fig:ST-digraphs-inductive-cutset}
\end{figure}

\begin{claimx}\label[claimx]{cl:st-cutset}
$H$ is an $st$-cutset of $G$.
\end{claimx}

\begin{proof}
Recall that, since $E(\chi)$ is an $st$-cutset, removing the edges of $E(\chi)$ from $G$ yields two connected components $C_s$ and $C_t$ such that $s$ belongs to $C_s$ and $t$ belongs to $C_t$; see \cref{fig:ST-digraphs-inductive-cutset-B}.
Let $C'_t$ be the graph consisting of $C_t$,  the vertex $v_\tau$, and the edges having $v_\tau$ as their tail (these are the edges in $E(\chi) \setminus H^-$, which are not part of $H$).
Also, let $C'_s$ be the graph obtained by removing from $C_s$ the vertex $v_\tau$ and the edges in $H^+$ (i.e., these are the edges outgoing from $v_\tau$); see \cref{fig:ST-digraphs-inductive-cutset-B-minus}.
We have that $C'_s$ and $C'_t$ do not share any vertex, since $C_s$ and $C_t$ do not share any vertex, since $V(C'_s)\subset V(C_s)$ and since the only vertex in $V(C'_t)\setminus V(C_t)$ is $v_\tau$, which does not belong to $C'_s$. Moreover, by construction $G = C'_t \cup C'_s \cup H$, in particular the only edges connecting vertices in $C'_s$ with vertices in $C'_s$ are those in $H$, which have their tails in $C'_s$ and their heads in $C'_t$. Also, we have that $s$ belongs to $C'_s$ and $t$ belongs to $C'_t$. To prove that $H$ is an $st$-cutset of~$G$, it only remains to argue that each of $C'_s$ and $C'_t$ is connected.
Since $C_t \subseteq C'_t$ and since $C_t$ is connected, we have that every pair of vertices distinct from $v_\tau$ is connected by an undirected path in $C'_t$. Also, the heads of the edges outgoing from $v_\tau$ belong to $C_t$ and, by construction, such edges belong to $C'_t$. Hence, there exists an undirected path in $C'_t$ between $v_\tau$ and every vertex of $C_t$. Therefore, $C'_t$ is connected.
Now, suppose, for a contradiction, that $C'_s$ is not connected and thus there exists a vertex $v$ which is not in the same connected component as $s$ in $C'_s$. Since $G$ is a planar $st$-graph, it contains a directed path from $s$ to $v$. If such a path does not belong entirely to $C'_s$, it contains an edge which is directed from a vertex not in $C'_s$ to a vertex in $C'_s$. Moreover, such an edge belongs to $H$, however we already observed that all the edges in $H$ are outgoing from the vertices in $C'_s$, a contradiction.
\end{proof}

Consider the set $S_\downarrow$ consisting of the points in $S$ whose $y$-coordinates are smaller than~$y(p_\tau)$. We have the following crucial observation.

\begin{observation}\label[observation]{claim:extension}
$\Tparam{\chi} = \texttt{True}$ if and only if there exists some key $\varphi$, with $E(\varphi) = H$, such that $\Tparam{\varphi}=\texttt{True}$, the edges in $H^-$ are drawn in $\varphi$ as in $\chi$, the edges in $H^+$ have their heads mapped by $\varphi$ on $p_\tau$ and their tails on a point in $S_\downarrow$. 
\end{observation}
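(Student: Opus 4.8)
The plan is to prove the two directions of the equivalence separately, in each case passing between an UPSE of $G_\chi^+$ and an UPSE of $G_\varphi^+$ by inserting or deleting $v_\tau$ together with its out-edges. Throughout I would invoke \cref{cl:st-cutset} (so that $H$ is an $st$-cutset, $v_\tau$ is a sink of $G_\varphi^+$, and $G_\varphi^+$ is exactly $G_\chi^+$ with the out-edges of $v_\tau$ removed, together with the heads that those edges exclusively reach), and the standing hypothesis of this case of the algorithm, namely that the segments $\overline{p_iq_i}$ with $e_i\in E(\chi)$ are pairwise non-crossing. First I would establish a structural fact used in both directions: in any UPSE of $G_\chi^+$ that draws the edges of $E(\chi)$ as in $\chi$, the vertex $v_\tau$ is placed on $p_\tau$ and is the topmost vertex of $G_\chi$, every other vertex of $G_\chi$ lying strictly below the line $\ell_\chi: y=y(p_\tau)$. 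This holds because every vertex $v$ of $G_\chi$ lies on a directed $s$-$t$ path of $G$, hence admits a directed path to $t$; this path must leave the $s$-side of the cutset $E(\chi)$ through some $e_i$, the sub-path from $v$ to the tail of $e_i$ is upward, and the tail of $e_i$ is mapped by $\chi$ to a point of height at most $y(p_\tau)$, whence $y\le y(p_\tau)$ at $v$, with equality only for $v=v_\tau$. The same argument inside $G_\varphi^+$ (where $v_\tau$ is a sink) shows that every vertex of $C'_s$ sits strictly below $p_\tau$ in any UPSE witnessing $T[\varphi]=\texttt{True}$, and that the only vertices of $G_\varphi^+$ above $\ell_\chi$ are the heads of the $H^-$-edges. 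I would also note that the head of every $E(\chi)$-edge lies strictly above $\ell_\chi$ (since $v_\tau$, being in $C_s$, is not such a head).

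For the forward direction I would take an UPSE $\Gamma$ witnessing $T[\chi]=\texttt{True}$, delete from it $v_\tau$ and its out-edges (and any head of such an edge that becomes isolated) to get an upward planar straight-line drawing of $G_\varphi^+$, and let $\varphi$ be the key recording how $\Gamma$ draws the edges of $H$. For edges of $H^-\subseteq E(\chi)$ this drawing is exactly the one in $\chi$; for edges of $H^+$ the head $v_\tau$ sits on $p_\tau$ and the tail sits below it, hence on a point of $S_\downarrow$. Verifying that $\varphi$ is a valid key is routine for the $st$-cutset and the same-tail/same-head conditions; for the line condition I would observe that the topmost tail of $H$ lies strictly below $p_\tau$, and that since every $E(\chi)$-segment has its head above $\ell_\chi$ while every $H^+$-segment reaches up exactly to $p_\tau$, the horizontal line through that topmost tail meets every segment of $H$ (possibly at an endpoint). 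The restricted drawing then witnesses $T[\varphi]=\texttt{True}$ and $\varphi$ has all the stated properties.

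For the backward direction I would take an UPSE $\Gamma'$ witnessing $T[\varphi]=\texttt{True}$, in which $v_\tau$ (a sink of $G_\varphi^+$) is placed on $p_\tau$ because the $H^+$-edges are drawn as in $\varphi$, and re-insert $v_\tau$'s out-edges — all of which lie in $E(\chi)$, so that adding them back to $G_\varphi^+$ produces $G_\chi^+$. Each such edge $(v_\tau,z)$ I would draw exactly as $\chi$ prescribes, namely from $p_\tau$ to the point $q$ with $\langle(v_\tau,z),p_\tau,q\rangle\in\chi$, placing $z$ on $q$ if $z$ is not already a vertex of $G_\varphi^+$ (in which case $z$ is a head of an $H^-$-edge and is already on that same point $q$, since $\chi$ uses one point per head). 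Upwardness is immediate. Injectivity of the placement holds because each newly used point lies strictly above $\ell_\chi$ while every vertex of $C'_s$ lies strictly below it (structural fact), and distinct heads get distinct points in $\chi$. The crux is planarity: every added segment meets $\ell_\chi$ only at the shared endpoint $p_\tau$ and otherwise lies strictly above $\ell_\chi$, whereas among the edges of $G_\varphi^+$ the $C'_s$-edges lie strictly below $\ell_\chi$ and the $H^+$-edges meet $\ell_\chi$ only at $p_\tau$; thus an added segment can conflict only with an $H^-$-edge, and added segments together with $H^-$-edges are all $E(\chi)$-segments drawn as in $\chi$, hence pairwise non-crossing by the standing hypothesis — and the same bound rules out an added segment passing through a head point, or an existing edge through a new vertex. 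This gives the desired UPSE of $G_\chi^+$, so $T[\chi]=\texttt{True}$.

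The step I expect to be the main obstacle is this last planarity verification: it is exactly what dictates the choice of $v_\tau$ as a topmost tail and the insistence that the reinserted out-edges be drawn verbatim as in $\chi$, so that crossing-freeness is inherited from the already-checked non-crossing of the $E(\chi)$-segments instead of being re-derived; the remaining work is the bookkeeping over the three edge types of $G_\varphi^+$ and the checks that no added vertex lands on an existing edge or vertex and vice versa.
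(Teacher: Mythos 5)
The paper gives no proof of this statement at all---it is asserted as a ``crucial observation''---so there is nothing to compare line by line; your argument supplies the missing justification, and it is the argument the paper implicitly relies on. The two ingredients you isolate are exactly right: the structural fact that in any witnessing UPSE every vertex of $G_\chi$ other than $v_\tau$ lies strictly below $\ell_\chi$ (proved via the upward path from each vertex to the tail of a cutset edge), and the half-plane separation that reduces all planarity checks for the reinserted edges to the already-verified pairwise non-crossing of the $E(\chi)$-segments. Both directions, including the bookkeeping about which heads enter or leave the graph and the verification that the new key satisfies the horizontal-line condition, are sound. (One phrasing slip: in the forward direction you should not delete $v_\tau$ itself, since it survives in $G_\varphi^+$ as the common head of the $H^+$-edges; your subsequent sentences show you treat it correctly.)

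The one point worth making explicit is your reliance, via \cref{cl:st-cutset}, on the assertion that every out-edge of $v_\tau$ belongs to $E(\chi)$. This is not true for an arbitrary valid key: nothing in the key constraints prevents $v_\tau$ from having an out-neighbour $z$ inside $C_s$ (e.g., $G$ with edges $s\to a$, $s\to b$, $a\to b$, $a\to t$, $b\to t$, cutset $\{a\to t, b\to t\}$, and the tail of $a\to t$ mapped above the tail of $b\to t$). In that situation the claimed set $H$ is in fact \emph{not} an $st$-cutset, so the backward hypothesis is vacuous (no valid key $\varphi$ with $E(\varphi)=H$ exists), and your own structural fact forces $T[\chi]=\texttt{False}$ on the forward side; hence the equivalence still holds, but only because both sides fail simultaneously. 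A complete write-up should dispose of this case explicitly rather than citing \cref{cl:st-cutset} unconditionally, since that claim as stated in the paper is itself only guaranteed once one knows that one of the two sides of the equivalence is satisfied.
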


In view of \cref{claim:extension}, we can now define a procedure to compute $\Tparam{\chi}$ and $\Qparam{\chi}$. Assume that the edges $e_1,\dots,e_{|H^-|},\dots,e_{|H|} \in H$ are ordered so that the edges of $H^-$ precede those of $H^+$. By \cref{claim:extension}, if $|S_\downarrow| < |H^+|$, then we leave $\Tparam{\chi}$ and $\Qparam{\chi}$ unchanged, i.e., $\Tparam{\chi} = \texttt{False}$ and $\Qparam{\chi} = \varnothing$.
In fact, in this case, there are not enough points in $S_\downarrow$ to map the tails of the edges in $H^+$.
Otherwise, let $D$ be the set of all permutations with repetitions of $|H^+|$ points from $S_\downarrow$. We define a set $\Phi$ of keys that, for each $(d_1,\dots,d_{|H^+|})\in D$, contains a key $\varphi$ such that:
\begin{enumerate}
    \renewcommand{\theenumi}{\roman{enumi}} %
    \renewcommand{\labelenumi}{\textbf{(\theenumi)}} %
\item $E(\varphi) = H$; 
\item for any $i=1,\dots,|H^-|$, the triple containing $e_i$ in $\varphi$ is the same as the triple containing $e_i$ in $\chi$ (note that $e_i \in H^-$);
\item for any $j= |H^-|+1,\dots,|H|$, the triple containing $e_j$ in $\varphi$ has $q_j = p_\tau$, and $p_j=d_{j-|H^-|}$ (note that $e_j \in H^+$); and
\item for every $i=1,\dots,|H^-|$ and $j= |H^-|+1,\dots,|H|$, it holds $p_i = p_j$ if and only if $e_i$ and $e_j$ have the same tail.
\end{enumerate}
Let $\Phi^{\texttt{T}} = \{\varphi: \varphi \in \Phi \wedge \ \Tparam{\varphi} = \texttt{True}\}$.
By \cref{claim:extension}, we have $\Tparam{\chi} = \texttt{True}$ if and only if $|\Phi^\texttt{T}|\geq 1$. Thus, we set $\Tparam{\chi}=\bigvee_{\varphi\in\Phi} \Tparam{\varphi}$ and $\Qparam{\chi} = \Phi^\texttt{T}$.
We say that $\chi$ \emph{stems from} any key $\varphi \in \Phi$ with $\Tparam{\varphi}=\texttt{True}$.

We now upper bound the sizes of $T$ and $Q$ and the time needed to compute them.

\begin{claimx}\label[claimx]{cl:table-space}
Tables $T$ and $Q$ have size in $\bigO(n^{3k})$ and $\bigO(kn^{4k}\log n)$, respectively.
\end{claimx}

\begin{proof}
First, we give an upper bound on the number of entries of $T$ (and, thus, of $Q$), which we denote by $\rho$.
Each entry of $T$ is associated with a key $\chi$ defined by an $st$-cutset $E(\chi)$ of size at most $h \leq k$, a permutation (possibly with repetitions) of $h$ points in $S$ describing a mapping of the tails of the edges in $E(\chi)$ with points in $S$, and a permutation (possibly with repetitions) of $h$ points in $S$ describing a mapping of the heads of the edges in $E(\chi)$ with points in $S$. 
Recall that $|S|=n$, that $\binom{a}{b} \leq a^b$, and that the number of permutations with repetition of $h$ elements from a set $U$ is $|U|^h$.
Therefore, we have that $\rho \leq  \binom{m}{k} \cdot n^{k} \cdot n^{k} \leq (mn^2)^k$. Since $m \in \bigO(n)$, we thus have $\rho \in \bigO(n^{3k})$.

We can now upper bound the size of $T$ and $Q$.
Since each entry of $T$ stores a single bit, we immediately have that $T$ has $\bigO(n^{3k})$ size.
Instead, each entry of table $Q$ stores $\bigO(n^k)$ keys of size $\bigO(k \log n)$; thus, $Q$ has $\bigO(kn^{4k} \log n)$ size. 
The upper bound on the number of keys comes from the number of ways to map the tails of the at most $k$ edges incoming into $v_\tau$ on the points of $S$, which has size $n$; this number is $\binom{n}{k} \in \bigO(n^k)$. The upper bound on the size of each key comes from the fact that it consists of at most $k$ triplets each containing an identifier of $\bigO(n)$ edges and two identifiers of $\bigO(n)$ points. 
\end{proof}

\begin{claimx}\label[claimx]{cl:table-time}
Tables $T$ and $Q$ can be computed in $\bigO(n^{4k})$ and $\bigO(kn^{4k}\log n)$ time, respectively.
\end{claimx}

\begin{proof}
We determine the time needed to compute, for each key $\chi$, the value $\Tparam{\chi}$ and $\Qparam{\chi}$.
For each key $\chi$, we need to verify whether the $h$ segments $\overline{p_i q_i}$ intersect at a point different from a common endpoint, which can be tested in $\bigO(k \log k)$ time~\cite{shamos1976geometric}. Moreover, if $|V(G_\chi)|>1$, computing $\Tparam{\chi}$ requires accessing the value of up to $|S_\downarrow|^{|H^+|} < n^k$ entries of $T$, and verifying whether at least one of them contains the value $\texttt{True}$. Since $n>k$, the $\bigO(k \log k)$ term in the running time is dominated by the $\bigO(n^k)$ term, hence the time needed to compute each entry $\Tparam{\chi}$ is thus $\bigO(n^k)$. 
Since, by \cref{cl:table-space}, there are $\bigO(n^{3k})$ keys $\chi$, it follows that $T$ can be computed in overall $\bigO(n^{4k})$ time. On the other hand, the time needed to compute each entry $\Qparam{\chi}$ is upper bounded by the time needed to write the $\bigO(n^k)$ keys contained in $\Qparam{\chi}$, each of which has $\bigO(k \log n)$ size, i.e., $\bigO(kn^k\log{n})$ time per entry.
It follows that $Q$ can be computed in overall $\bigO(kn^{4k}\log{n})$ time.
\end{proof}

Finally, in order to verify whether $G$ admits an \UPSE on $S$, we need to check whether $\Lambda\neq \varnothing$.
Computing the maximum size of an $st$-cutset of a planar $st$-graph $G$ can be done in linear time, as it reduces to the problem of computing the length of a shortest path in the dual of any embedding of $G$ (between the vertices representing the left and right outer faces of this embedding)~\cite{0023376,DBLP:journals/tcs/BattistaT88}.
Therefore, the overall running time to test whether $G$ admits an \UPSE on $S$ is dominated by the time needed to compute $T$, that is, $\bigO(n^{4k})$ time. 

To obtain an \UPSE $\Gamma$ of $G$ on $S$, if any, we proceed as follows.
Suppose that the algorithm terminates with a positive answer and let $\sigma$ be any key in $\Lambda$. 
We initialize $\Gamma$ to a drawing of the edges in $E(\sigma)$, where each edge $e_i \in E(\sigma)$ is drawn as in $\sigma$. 
Then, in $\bigO(n^k)$ time, we can search in $T$ a key $\chi$ with $\Tparam{\chi} = \texttt{True}$ such that $\sigma$ stems from $\chi$, and update $\Gamma$ to include a drawing of the edges in $E(\chi)\setminus  E(\sigma)$, where each edge $e_i \in E(\chi)\setminus  E(\sigma)$ is drawn as in $\chi$; note that the edges in $E(\chi) \cap E(\sigma)$ are drawn in $\chi$ as they are drawn in $\sigma$. Applying such a procedure until a key $\alpha$ is reached such that $\Tparam{\alpha}=\texttt{True}$ and $E(\alpha)$ is the set of edges incident to $s$ yields the desired UPSE of $G$ on $S$. Note that the tail with largest $y$-coordinate among the edges in $E(\sigma)$ is higher than the horizontal line through the tail with largest $y$-coordinate among the edges in $E(\chi)$, hence the depth of the recursion is linear in the size of $G$. We can therefore compute $\Gamma$ in $\bigO(n^{k+1})$ time.

From the above discussion, we have the following theorem.

\begin{theorem} \label{th:k-paths}
Let $G$ be an $n$-vertex planar $st$-graph whose maximum $st$-cutset has size~$k$ and let $S$ be a set of $n$ points. {\sc UPSE Testing} can be solved for $(G,S)$ in $\bigO{(n^{4k})}$ time and $\bigO{(n^{3k})}$ space; if an \UPSE of $G$ on $S$ exists, it can be constructed within the same~bounds.
\end{theorem}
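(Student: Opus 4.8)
The plan is to assemble \cref{th:k-paths} directly from the ingredients already developed in the section, treating it as the culmination of the dynamic programming construction rather than a fresh argument. First I would recall the correctness of the tables: by the definition of $T$, the entry $T[\chi]$ is $\texttt{True}$ exactly when $G_\chi^+$ admits an \UPSE on a subset of $S$ with $s$ on the lowest point $p_s$ and the cutset edges drawn as in $\chi$; and $G$ admits an \UPSE on $S$ if and only if $\Lambda \neq \emptyset$, since a key $\sigma \in \Lambda$ witnesses an \UPSE of $G_\sigma^+ = G$ (because $E(\sigma)$ is the set of edges incident to $t$, so $G_\sigma = G$ minus those edges and $G_\sigma^+ = G$) in which $t$ is forced onto the top point $p_t$. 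The base case ($|V(G_\chi)| = 1$) and the inductive step are justified by \cref{claim:extension} together with \cref{cl:st-cutset}, which guarantees that the set $H$ used in the recursion is again an $st$-cutset, so the recursion is well-defined and the tables are filled in increasing order of $|V(G_\chi)|$.

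Next I would account for the running time and space. The key point is \cref{cl:table-space}, giving $\bigO(n^{3k})$ entries and $\bigO(n^{3k})$ total space for $T$ (one bit per entry), and \cref{cl:table-time}, giving $\bigO(n^{4k})$ total time to fill $T$. I would also note that computing the parameter $k$ itself — the maximum size of an $st$-cutset — takes linear time via the shortest-path computation in the dual of a planar embedding of $G$~\cite{0023376,DBLP:journals/tcs/BattistaT88}, so this does not dominate. Checking $\Lambda \neq \emptyset$ at the end is immediate. Hence \textsc{UPSE Testing} is solved for $(G,S)$ in $\bigO(n^{4k})$ time and $\bigO(n^{3k})$ space.

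Finally, for the reconstruction of an actual \UPSE, I would spell out the backtracking procedure already sketched: pick $\sigma \in \Lambda$, initialize $\Gamma$ with the edges of $E(\sigma)$ drawn as in $\sigma$; then repeatedly find a predecessor key $\chi$ from which the current key stems (scanning the $\bigO(n^k)$ candidate keys consistent with the ``stems from'' relation and checking $T[\chi] = \texttt{True}$), draw the new edges $E(\chi) \setminus E(\sigma)$ as in $\chi$, and recurse. Each step is $\bigO(n^k)$, the recursion depth is $\bigO(n)$ since each step fixes the placement of at least one new vertex (the tail $v_\tau$), so the full drawing is recovered in $\bigO(n^{k+1})$ time, which is absorbed into the $\bigO(n^{4k})$ bound; correctness of the recovered drawing follows because consecutive keys agree on the drawing of their shared cutset edges and each invocation of \cref{claim:extension} certifies that the partial drawing extends.

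The main obstacle — though it has effectively been discharged by the preceding claims — is ensuring that the ``extensibility only depends on the drawing of the cutset edges'' intuition is actually valid, i.e., that an \UPSE of $G_\chi^+$ with the cutset drawn as in $\chi$ can always be glued to an \UPSE of the complementary part without introducing crossings or violating upwardness. This is exactly what \cref{claim:extension} encapsulates, relying on the key invariant that the horizontal line $\ell_\chi$ through the highest tail of $E(\chi)$ crosses every cutset segment; so the real work is already localized in that observation, and the theorem statement is obtained by bookkeeping over it.
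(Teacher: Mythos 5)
Your proposal is correct and follows essentially the same route as the paper: it assembles the theorem from \cref{cl:st-cutset}, \cref{claim:extension}, \cref{cl:table-space}, and \cref{cl:table-time}, checks $\Lambda\neq\emptyset$ for the decision, notes the linear-time computation of $k$ via the dual, and recovers a drawing by backtracking through the ``stems from'' relation in $\bigO(n^{k+1})$ time using only table $T$. Nothing essential is missing.
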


We now turn our attention to the design of an algorithm for the enumeration of the UPSEs of $G$ on $S$. The algorithm exploits the table $Q$ and the set $\Lambda$. By \cref{cl:table-space,cl:table-time}, these can be computed in $\bigO(k n^{4k} \log{n})$ time~and~space. Our enumeration algorithm defines and explores an acyclic digraph $\mathcal{D}$. The nodes of the digraph correspond to the keys $\chi$ of the dynamic programming table $Q$ such that $\Qparam{\chi}\neq \varnothing$, plus a source $n_{\cal S}$ and a sink $n_{\cal T}$. Let $\chi_i$ and $\chi_j$ be two keys of $Q$ such that $\Qparam{\chi_i}\neq\varnothing$ and $\Qparam{\chi_j}\neq \varnothing$, and let $n(\chi_i)$ and $n(\chi_j)$ be the nodes corresponding to $\chi_i$ and $\chi_j$ in $\mathcal{D}$, respectively. There exists an edge directed from $n(\chi_i)$ to $n(\chi_j)$ in $\mathcal{D}$ if $\chi_j \in \Qparam{\chi_i}$. Also, there exists an edge directed from $n_{\cal S}$ to each node $n(\sigma)$ such that $\sigma \in \Lambda$. Finally, there exists an edge directed to $n_{\cal T}$ from each node $n(\chi)$ such that $\Qparam{\chi}=\{\bot\}$.
Note that $n_{\cal S}$ is the unique source of $\mathcal{D}$, $n_{\cal T}$ is the unique sink of $\mathcal{D}$, and $\mathcal{D}$ has no directed cycle.
Hence, $\mathcal{D}$ is an $n_{\cal S}n_{\cal T}$-graph.

The exploration of $\mathcal{D}$ performed by our enumeration algorithm is a depth-first traversal. Every distinct path in $\mathcal{D}$ from $n_{\cal S}$ to $n_{\cal T}$ corresponds to an UPSE of $G$ on $S$. We initialize a current UPSE $\Gamma$ of $G$ on $S$ as $\Gamma=S$ (where no edge of $G$ is drawn). When the visit traverses an edge of $\mathcal{D}$ directed from a node $n(\chi_i)$ to a node $n(\chi_j)$, it adds to $\Gamma$ the edges in $E(\chi_j)\setminus E(\chi_i)$, drawn as in $\chi_j$. Note that these are all the edges in $E(\chi_j)$ if $n(\chi_i)=n_{\cal S}$ and it is an empty set if $n(\chi_j)=n_{\cal T}$. Whenever the traversal reaches $n_{\cal T}$, it outputs the constructed UPSE $\Gamma$ of $G$ on $S$. When the visit backtracks on a node $n(\chi_i)$ coming from an edge $(n(\chi_i), n(\chi_j))$, it removes from $\Gamma$ the edges in $E(\chi_j)\setminus E(\chi_i)$. 

To prove the correctness of the algorithm, we show what follows: \begin{enumerate}
    \renewcommand{\theenumi}{\roman{enumi}} 
    \renewcommand{\labelenumi}{\textbf{(\theenumi)}} 
    \item \label{prop:distinct-drawing} Distinct $n_{\cal S}n_{\cal T}$-paths in $\mathcal D$ correspond to different UPSEs of $G$ on $S$.
    \item \label{prop:all-drawings} For each UPSE of $G$ on $S$, there exists in $\mathcal D$ an $n_{\cal S}n_{\cal T}$-path corresponding to it.
\end{enumerate}
For a directed path $\mathcal{P}$ in $\mathcal D$, let $E(\mathcal {P})$ be the set that contains all the edges in the sets $E(\chi)$, where $\chi$ is any key corresponding to a node in $\mathcal{P}$. 

\begin{itemize}
    \item To prove \cref{prop:distinct-drawing}, we proceed by contradiction. Let $\Gamma$ be an UPSE of $G$ on $S$ that is generated twice by the algorithm, when traversing distinct $n_{\cal S}n_{\cal T}$-paths $\mathcal{P}_1$ and $\mathcal{P}_2$. Let $n(\chi_x)$ be the closest node to $n_\mathcal{S}$ in $\mathcal{P}_1$ and $\mathcal{P}_2$ such that $(n(\chi_x),n(\chi_1))$ is an edge in $\mathcal{P}_1$ and  $(n(\chi_x),n(\chi_2))$ is an edge in $\mathcal{P}_2$, with $n(\chi_1) \neq n(\chi_2)$, where $\chi_x$, $\chi_1$, and $\chi_2$ are keys of $Q$. Note that, since the path $\mathcal{P}_x$ from $n_\mathcal{S}$ to $n(\chi_x)$ (possibly such a path is a single node if $n_\mathcal{S}=n(\chi_x)$) is the same in $\mathcal{P}_1$ and $\mathcal{P}_2$, the restriction $\Gamma_x$ of $\Gamma$ to the edge set $E(\mathcal {P}_x)$ is also the same in $\mathcal{P}_1$ and $\mathcal{P}_2$. Hence, the tail $v_{\chi_x}$ with largest $y$-coordinate of an edge in $E(\chi_x)$ is uniquely defined by $\Gamma_x$. This implies that the edge sets $E(\chi_1)$ and $E(\chi_2)$ coincide, as they are both obtained from $E(\chi_x)$ by replacing the edges outgoing from $v_{\chi_x}$ with the edges incoming into $v_{\chi_x}$ in $G$. Since $E(\chi_1)= E(\chi_2)$ and $\chi_1\neq\chi_2$, it follows that $\chi_1$ and $\chi_2$ must differ in the way such keys map the tails of the edges incoming into $v_{\chi_x}$ to the points of $S$. Then the UPSEs yielded by $\mathcal{P}_1$ and $\mathcal{P}_2$ are different, a contradiction.

    \item To prove \cref{prop:all-drawings}, we show that, if there exists an UPSE $\Gamma$ of $G$ on $S$, then there exists a path in $\mathcal D$ from $n_\mathcal{S}$ to $n_\mathcal{T}$ that yields $\Gamma$. For $i=1,\dots,n$, let $S_i$ be the set that consists of the lowest $i$ points of $S$. Also, for $i=1,\dots,n-1$, let $\Gamma_i$ be the restriction of $\Gamma$ to the vertices of $G$ mapped to $S_i$ and to all their incident edges, including those whose other end-vertex is not in $S_i$. We claim that there exists a path $\mathcal{P}_i$ in $\mathcal{D}$ that starts from a node $n_i$ and ends at $n_\mathcal{T}$  such that: (1) the set $E(\mathcal{P}_i)$ coincides with the set of edges that are embedded in $\Gamma_i$; (2) the embedding of the edges in $E(\mathcal{P}_i)$ defined by the keys $\chi$ corresponding to nodes in $\mathcal{P}_i$ is the same as in $\Gamma_i$; and (3) let $\chi_i$ be the key corresponding to $n_i$, then $E(\chi_i)$ contains all and only the edges $e$ of $G$ such that an end-vertex of $e$ is mapped by $\Gamma$ to a point in $S_i$ and the other end-vertex of $e$ is mapped by $\Gamma$ to a point not in $S_i$. The claim implies \cref{prop:all-drawings}, as when $i=n-1$, we have that $E(\mathcal{P}_{n-1})$ is the edge set of $G$, by (1), and that the embedding of the edges in $E(\mathcal{P}_{n-1})$ defined by the keys $\chi$ corresponding to nodes in $\mathcal{P}_{n-1}$ is $\Gamma$, by (2), hence $(n_\mathcal{S},\chi_{n-1})\cup \mathcal{P}_{n-1}$ is the desired path from $n_\mathcal{S}$ to $n_\mathcal{T}$ that yields $\Gamma$.
    
    In order to prove the claim, we proceed by induction. In the base case, we have $i=1$, hence $S_1$ consists only of the point $p_s$ and $\Gamma_1$ is the restriction of $\Gamma$ to all the edges incident to $s$. Since $\Gamma$ is an UPSE, $\Gamma_1$ is an embedding of such edges in which $s$ lies on $p_s$ and any two heads of such edges are not aligned with $p_s$. Hence, by construction, there is a key $\chi$ such that  $E(\chi)$ consists of the set of edges incident to $s$, such that $\Qparam{\chi}=\{\bot\}$, and such that the embedding of the edges in $E(\chi)$ on $S$ defined by $\chi$ is $\Gamma_1$. It follows that $\mathcal D$ contains a node $n(\chi)$ corresponding to $\chi$, and thus a path $\mathcal{P}_1=(n(\chi),n_\mathcal{T})$ with the properties required by the claim.

    For the inductive case, we have $i>1$. Let $p_i$ be the point of $S_i$ with highest $y$-coordinate and let $v_i$ be the vertex of $G$ mapped to $p_i$ by $\Gamma$. By induction, there exists a path $\mathcal{P}_{i-1}$ in $\mathcal{D}$ that starts from a node $n_{i-1}$ and ends at $n_\mathcal{T}$  such that: (1) the set $E(\mathcal{P}_{i-1})$ is the set of edges embedded in $\Gamma_{i-1}$; (2) the embedding of the edges in $E(\mathcal{P}_{i-1})$ defined by the keys corresponding to nodes in $\mathcal{P}_{i-1}$ defines $\Gamma_{i-1}$; and (3) let $\chi_{i-1}$ be the key corresponding to $n_{i-1}$, then $E(\chi_{i-1})$ contains all and only the edges whose end-vertices are mapped by $\Gamma$ one  to a point in $S_{i-1}$ and the other  to a point not in $S_{i-1}$. Note that (3) ensures that all the edges incoming into $v_i$ are in $E(\chi_{i-1})$. 

    Consider the edge set $H_i$ composed of the edges outgoing from $v_i$ and of the edges in $E(\chi_{i-1})$, except for those incoming into $v_i$. We prove that $H_i$ is an $st$-cutset. Indeed, by (3), every edge of $G$ that in $\Gamma$ starts from a point below $p_i$ and ends at a point above $p_i$ is in $E(\chi_{i-1})$. Then $H_i$ comprises all the edges that start from $p_i$ or from a point below $p_i$ and end at a point above $p_i$. Hence, the removal of the edges of $H_i$ splits $G$ into two connected subgraphs, one induced by the vertices (including $s$) mapped by $\Gamma$ to $S_i$, and one induced by the vertices (including $t$) mapped by $\Gamma$ to the points above $p_i$.

    Since $H_i$ is an $st$-cutset, there exists a key $\chi_i$ such that $E(\chi_i)=H_i$ and the edges of $E(\chi_i)$ are embedded in $\chi_i$ as in $\Gamma_i$. Note that $p_i$ is the tail of an edge in $E(\chi_i)$ with largest $y$-coordinate, hence our algorithm, starting from the $st$-cutset $E(\chi_i)$, removes the edges outgoing from $v_i$, and adds the edges incoming into $v_i$, thus it constructs the $st$-cutset $E(\chi_{i-1})$ and, from there, the key $\chi_{i-1}$ in which the edges of $E(\chi_{i-1})$ are mapped as in $\Gamma_{i-1}$. The algorithm then inserts $\chi_{i-1}$ in $\Qparam{\chi_i}$, and hence the digraph $\mathcal D$ contains the edge $(n_i,n_{i-1})$, where $n_i$ is the node of $\mathcal D$ corresponding to $\chi_i$. This completes the induction, hence the proof of the claim and the one of \cref{prop:all-drawings}.
\end{itemize}

It remains to discuss the running time of our enumeration algorithm. Concerning the set-up time, the table $Q$  can be constructed in $\bigO(kn^{4k}\log n)$ time, by \cref{cl:table-time}. Also, the digraph $\mathcal D$ can be constructed in linear time in the size of $Q$, which is $\bigO(kn^{4k}\log n)$ by \cref{cl:table-space}; indeed, the edges outgoing from a node $n(\chi)$ in $\mathcal D$ are those toward the nodes whose corresponding keys are in $\Qparam{\chi}$. Concerning the space usage, again by \cref{cl:table-space}, we have that $Q$ and $\mathcal D$ have $\bigO(kn^{4k}\log n)$ size. Finally, we discuss the delay of our algorithm. The paths from $n_{\mathcal S}$ to $n_{\mathcal T}$ have $\bigO(n)$ size; indeed, each edge $(n(\chi),n(\chi'))$ is such that the horizontal line through the tail with largest $y$-coordinate among the edges in $E(\chi)$ is higher than the horizontal line through the tail with largest $y$-coordinate among the edges in $E(\chi')$. Between an UPSE and the next one, at most two paths are traversed (one to backtrack and one to again reach $n_{\mathcal T}$), hence the number of edges of $\mathcal D$ that are traversed between an UPSE and the next one is $\bigO(n)$. The total number of edges of $G$ which are deleted from or added to the current embedding when traversing such paths is in $\bigO(n)$, given that the size of $G$ is $\bigO(n)$. Hence, the delay of our algorithm is $\bigO(n)$. We get the following.

\begin{theorem}\label{th:st-enumeration}
Let $G$ be a $n$-vertex planar $st$-graph whose maximum $st$-cut has size $k$ and let $S$ be a set of $n$ points. It is possible to enumerate all \UPSEs of $G$ on $S$ with $\bigO(n)$ delay, using $\bigO(k n^{4k}\log n)$ space, after $\bigO(k n^{4k}\log n)$ set-up time. 
\end{theorem}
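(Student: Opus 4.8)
The plan is to assemble the statement directly from the machinery developed above: the table $Q$, the list $\Lambda$, the auxiliary acyclic digraph $\mathcal{D}$, and the depth-first traversal of $\mathcal{D}$. First I would recall that, by \cref{cl:table-time,cl:table-space}, the table $Q$ together with $\Lambda$ can be computed in $\bigO(kn^{4k}\log n)$ time and stored in $\bigO(kn^{4k}\log n)$ space. The digraph $\mathcal{D}$ — whose nodes are the keys $\chi$ with $Q[\chi]\neq\emptyset$ plus the two terminals $n_{\mathcal S},n_{\mathcal T}$, and whose arcs are read off directly from $Q$ (an arc $n(\chi_i)\to n(\chi_j)$ whenever $\chi_j\in Q[\chi_i]$) and from $\Lambda$ (arcs $n_{\mathcal S}\to n(\sigma)$ for $\sigma\in\Lambda$, and arcs $n(\chi)\to n_{\mathcal T}$ when $Q[\chi]=\{\bot\}$) — can then be built in time and space linear in $|Q|$, hence within the same asymptotic bounds. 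This accounts for the claimed set-up time and space usage; in particular, if $\Lambda=\emptyset$ then $\mathcal{D}$ has no $n_{\mathcal S}n_{\mathcal T}$-path, consistently with $G$ having no \UPSE on $S$.

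Next I would invoke the correctness argument already laid out via properties \textbf{(i)} and \textbf{(ii)}: distinct $n_{\mathcal S}n_{\mathcal T}$-paths of $\mathcal{D}$ yield distinct \UPSEs of $G$ on $S$, and every \UPSE of $G$ on $S$ is yielded by some such path; together these give a bijection between $n_{\mathcal S}n_{\mathcal T}$-paths and \UPSEs. It then suffices to observe that the depth-first traversal maintains a current drawing $\Gamma$ which, when an arc $(n(\chi_i),n(\chi_j))$ is descended, is augmented by the edges of $E(\chi_j)\setminus E(\chi_i)$ drawn as in $\chi_j$ (all of $E(\chi_j)$ if $n(\chi_i)=n_{\mathcal S}$, and nothing if $n(\chi_j)=n_{\mathcal T}$), and from which these same edges are removed upon backtracking; hence whenever the traversal reaches $n_{\mathcal T}$ the drawing $\Gamma$ equals the \UPSE associated with the current root-to-sink path, and each \UPSE is output exactly once.

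Finally, for the delay I would argue that every $n_{\mathcal S}n_{\mathcal T}$-path in $\mathcal{D}$ has $\bigO(n)$ arcs: descending one arc corresponds, by construction of $Q$, to fixing the placement of one further vertex of $G$ (the tail $v_\tau$ with largest $y$-coordinate of the current $st$-cutset), so a root-to-sink path has length at most $|V(G)|+2$. Between two consecutive outputs the traversal backtracks along a suffix of one path and then descends a suffix of the next, touching $\bigO(n)$ arcs in total; since descending or backtracking an arc inserts or deletes at most $k$ straight-line segments into/from $\Gamma$ in $\bigO(1)$ time each, the worst-case delay is $\bigO(n)$. Combining the three parts yields the theorem. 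The step I expect to be the real crux — though it is already handled in the preceding discussion — is property \textbf{(i)}: the bijectivity relies on the fact that from a fixed partial drawing $\Gamma_x$ the next vertex to be placed, and therefore the next $st$-cutset, is \emph{forced}, so two distinct continuations of the same path prefix must disagree on where some edge is drawn; keeping this forcing argument airtight is where care is needed.
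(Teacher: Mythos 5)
Your proposal follows the paper's proof essentially verbatim: the same table $Q$ and list $\Lambda$, the same auxiliary digraph $\mathcal{D}$ with source $n_{\mathcal S}$ and sink $n_{\mathcal T}$, the same depth-first traversal maintaining a current drawing $\Gamma$, and the same correctness properties establishing the bijection between $n_{\mathcal S}n_{\mathcal T}$-paths and \UPSEs. The one imprecision is in the delay accounting: bounding the work per arc by $k$ segment insertions and multiplying by the $\bigO(n)$ path length only yields $\bigO(nk)$, whereas the paper gets $\bigO(n)$ by noting that the difference sets $E(\chi_j)\setminus E(\chi_i)$ along a root-to-sink path partition $E(G)$, so the total number of segments added or removed between two consecutive outputs is $\bigO(n)$ independently of $k$.
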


\section{Planar st-Graphs Composed of Two st-Paths}\label{sec:2-paths}

In this section, we discuss a special, and in our opinion interesting, case of \cref{th:k-paths}, namely the one in which the underlying graph of the given planar $st$-graph is an $n$-vertex cycle. Applying \cref{th:k-paths} to this setting would yield an $\bigO(n^8)$-time UPSE testing algorithm. Now, based on a characterization of the positive instances, we give a much faster algorithm for this case, provided that the points of $S$ are in general position.

\begin{theorem} \label{th:two-paths}
Let $G$ be an $n$-vertex planar $st$-graph consisting of two $st$-paths $P_L$ and $P_R$, and let $S$ be a pointset with $n$ points in general position. We have that $G$ admits an \UPSE on $S$ with $P_L$ to the left of $P_R$ if and only if
$|P_L| \geq |\convL(S)|$ and $|P_R| \geq |\convR(S)|$.
Also, it can be tested in $\bigO(n\log n)$ time whether $G$ admits an \UPSE on $S$. 

\end{theorem}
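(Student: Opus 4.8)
The plan is to prove both directions of the characterization and then extract the algorithm. For the necessity direction, I would argue as follows: in any \UPSE of $G$ on $S$ with $P_L$ to the left of $P_R$, the drawing of $P_L$ is a $y$-monotone polyline from $p_s$ (the south extreme of $S$, where $s$ must be placed) to $p_t$ (the north extreme). I claim every point of $\convL(S)$ must be a vertex of $P_L$ in the drawing. Indeed, consider a point $p \in \convL(S)$ that is not $p_s$ or $p_t$. The monotone path drawing $P_L$ goes from $p_s$ to $p_t$; since $P_L$ lies (weakly) to the left of $P_R$ and together they form the outer boundary of a planar drawing, $P_L$ is the left boundary of the drawing. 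A point $p$ on the left envelope $\mathcal{E}_L(S)$ lies to the left of the line through $p_s$ and $p_t$; if $p$ were assigned to a vertex of $P_R$ or not used by $P_L$, then $P_L$ would have to pass to the left of $p$ at height $y(p)$, but $p$ being an extreme point of the convex hull on the left side means nothing of $S$ lies to its left at that height — contradicting that $P_L$ must route around $p$ while staying inside $\conv(S)$ (all points, hence all vertices, lie inside $\conv(S)$). A symmetric argument handles $\convR(S)$, giving $|P_L| \geq |\convL(S)|$ and $|P_R| \geq |\convR(S)|$. I will need to be careful that $p_s,p_t$ are counted consistently on both sides; since $\convL(S) \cap \convR(S) = \{p_s, p_t\}$ and $P_L, P_R$ share exactly $s,t$, the bookkeeping works out.

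For the sufficiency direction, assume $|P_L| \geq |\convL(S)|$ and $|P_R| \geq |\convR(S)|$. I would construct the \UPSE explicitly. Sort $S$ by $y$-coordinate as $p_s = z_1, z_2, \dots, z_n = p_t$. Place $s$ on $z_1$, $t$ on $z_n$. The idea is to assign the internal vertices of $P_L$ to a set $S_L \subseteq S$ and those of $P_R$ to $S_R = S \setminus (S_L \cup \{p_s,p_t\})$ such that (a) $\convL(S) \setminus \{p_s,p_t\} \subseteq S_L$, (b) $\convR(S)\setminus\{p_s,p_t\}\subseteq S_R$, (c) within each side the vertices are placed in $y$-order along the path, and (d) the resulting two monotone polylines do not cross except at $p_s,p_t$. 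The key geometric fact is that if I greedily assign the remaining "interior" points (those strictly inside, not on either envelope) to whichever side still has capacity, processing in $y$-order, I can always keep $P_L$ weakly left of $P_R$: at each height the next point goes to the side it can be reached from without crossing. General position ensures no three points are collinear, so the monotone polylines through $S_L$ and through $S_R$, being paths along convex-hull-respecting subsequences, stay on their respective sides of the line $p_sp_t$ near the envelopes and can be made non-crossing in the middle. The counting conditions are exactly what guarantee the envelopes fit: $|\convL(S)|-2$ interior-of-path slots on the left must accommodate the $|\convL(S)|-2$ forced left-envelope points, etc.

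The main obstacle will be step (d) of sufficiency — proving that the greedy assignment of the non-envelope points yields two non-crossing $y$-monotone paths with the correct left/right orientation. I would handle this with an invariant maintained over the sweep in increasing $y$: after processing $z_1,\dots,z_i$, the partial drawings of $P_L$ and $P_R$ reach two vertices $u_i, w_i$ with $u_i$ weakly left of $w_i$ (in the sense that $u_i$ is left of the line through $w_i$ and the topmost unprocessed... ), and every unprocessed point lies in the region above both current endpoints and between the two rays extending the paths — so whichever of $P_L, P_R$ we extend to $z_{i+1}$, the new segment stays inside the feasible region and on the correct side. Making "correct side" precise will require the notion of rays $\rho(p,S)$, $\ell(p,S)$ defined in the preliminaries. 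Finally, the algorithm: computing $\convL(S)$ and $\convR(S)$ takes $\bigO(n\log n)$ via a convex hull computation, and checking the two inequalities is $\bigO(1)$ afterward; the two envelope sizes are read off directly, so the whole test is $\bigO(n\log n)$, and this also covers the "or $P_R$ to the left of $P_L$" symmetric case by testing the swapped inequalities.
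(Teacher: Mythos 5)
Your necessity argument is sound and is a legitimate alternative to the paper's: you argue pointwise that every vertex of $\convL(S)$ must host a vertex of $P_L$ (using that a hull vertex of the left envelope has nothing of $\conv(S)$ strictly to its left at its height, so a $y$-monotone $P_L$ lying to the left of $P_R$ cannot bypass it), whereas the paper takes a single offending point $d\in\convL(S)$ assigned to $P_R$ and derives a crossing from the fact that the subpath of $P_R$ from $s$ to $d$ separates $\conv(S)$ into two regions that $P_L$ must both visit. Either route works, and the bookkeeping with $p_s,p_t$ is as you say. The $\bigO(n\log n)$ algorithmic claim also follows exactly as you describe.

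The sufficiency direction, however, has a genuine gap, and it is exactly the step you flag as ``the main obstacle.'' Your plan --- sweep the points in increasing $y$, force envelope points to their side, and assign interior points greedily ``to whichever side still has capacity'' while maintaining a safety invariant --- is not a proof, and the unproved part is where all the difficulty lives. Two concrete problems. First, the greedy rule is underspecified and capacity-driven choices can be geometrically fatal: when $|P_R|>|\convR(S)|$, the interior points that $P_R$ absorbs cannot be arbitrary. If $P_R$ detours to a deep interior point $q$, then any point of $S$ lying to the right of $q$ at a nearby height that is not on the right envelope must be collected by $P_L$, which would then lie to the right of $P_R$ there --- so only interior points on the \emph{second convex layer}, consecutive along it and adjacent to the place where $P_R$ leaves the hull boundary, are eligible. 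A $y$-order greedy has no mechanism for selecting these, and an early wrong assignment cannot be repaired. Second, your invariant (``every unprocessed point lies between the two extending rays'') is essentially the ``safe'' condition the paper uses for \emph{unconstrained} monotone Hamiltonian cycles (\cref{le:safe-is-extensible}); with the cardinality constraints $|P_L|,|P_R|$ added, safety of the current bipath no longer implies extensibility, and you would need to prove a quantitative version of that lemma --- which is the theorem itself. The paper instead proves sufficiency by induction on $|S|$, peeling a point $p$ off an envelope and re-inserting it by a local ``outer-extension''; the crux is its Case~B2, where deleting $p$ exposes a longer right envelope of $S\setminus\{p\}$ so the inductive hypothesis fails, and one must instead delete a carefully chosen set $X^*$ of $k^*=|P_R|-|\convR(S)|$ second-layer points inside the triangle $\Delta p^+pp^-$ and reroute two edges of $P_R$ through them. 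Some argument of this nature (or a correct, fully specified and verified sweep invariant accounting for capacities) is needed; as written, your sufficiency proof does not go through.
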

\begin{proof}

Provided the characterization in the statement holds, we can easily test whether $G$ admits an \UPSE on $S$ as follows. First, we compute the convex hull $\conv(S)$ of $S$, which can be done in $\bigO(n\log n)$ time. Second, we derive the sets $\convL(S)$ and $\convR(S)$, which can be done in $O(n)$ time by scanning $\conv(S)$. Finally, we compare the sizes of $\convL(S)$ and $\convR(S)$ with the ones of $P_L$ and $P_R$, which can be done in $O(1)$ time.  Therefore, in the following we focus on proving the characterization.

For the necessity, suppose for a contradiction that there exists an \UPSE on $S$ with $P_L$ to the left of $P_R$ and that $|P_L| < |\convL(S)|$; the case in which $|P_R| < |\convR(S)|$ is analogous. Since $| P_L \cup P_R| = n$, a vertex $d$ of~$P_R$ must be drawn on a point in $\convL(S)$. Consider the subpath $P_d$ of $P_R$ between $s$ and $d$. The drawing of~$P_d$ splits $\conv(S)$ into two closed regions, to the left and to the right of $P_d$. In any UPSE of $G$ on $S$ with~$P_L$ to the left of $P_R$, we have that $P_L$ lies in both regions, namely it lies in the region to the left of $P_d$ with the edge incident to $s$ and it lies in the region to the right of $P_d$ at $t$. Hence, the drawing of $P_L$ crosses the drawing of $P_d$, and thus the one of $P_R$, a contradiction.

In the following, we prove the sufficiency by induction on the size of $S$ (and, thus, of~$V(G)$). We give some preliminary definitions; see \cref{fig:2paths-base-case,fig:2paths-case-A1,fig:2paths-case-B2}.
Let $p_s$ and $p_t$ be the south and north extreme of $S$, respectively.
Consider the line $\ell_{st}$ through $p_s$ and $p_t$. 
Let $S_L$ ($S_R$) be the set consisting of the points of $S$ lying in the closed half-plane delimited by $\ell_{st}$ that includes all points that lie to the left (resp.\ right) of $\ell_{st}$, including $p_s$ and $p_t$.
Note that $\convL(S) \subseteq S_L$ and $\convR(S) \subseteq S_R$. Moreover, since $S$ is in general position, it holds that $S_L \cap S_R = \{p_s,p_t\}$.

\begin{figure}[t]
    \centering
        \includegraphics[scale=.6, page=1]{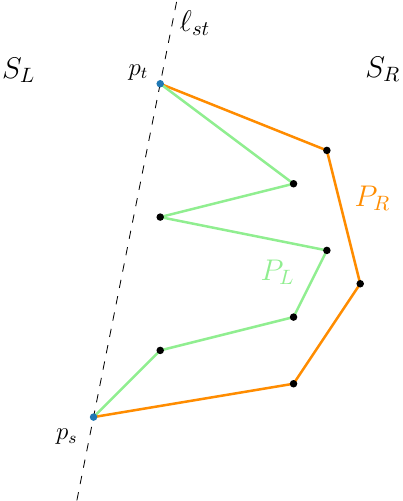}
    \caption{Illustration for the base case of \cref{th:two-paths}, when $S_L = \{p_s, p_t\}$ and $|\convR(S)| = |P_R|$. The drawing of $P_R$ coincides with $\mathcal E_R(S)$.}
        \label{fig:2paths-base-case}
\end{figure}

In the base case, it either holds that (a) $S_L = \{p_s, p_t\}$ and $|\convR(S)| = |P_R|$, or (b) $S_R= \{p_s, p_t\}$ and $|\convL(S)| = |P_L|$. We discuss the former case (see~\cref{fig:2paths-base-case}), as the latter case is symmetric.
In this case, an \UPSE~$\Gamma$ of $G$ on $S$ clearly exists and is, in fact, unique. In particular, the drawing of $P_R$ in~$\Gamma$ coincides with the right envelope $\mathcal{E}_R(S)$ of $S$, while the drawing of $P_L$ in $\Gamma$ is the $y$-monotone polyline that assigns to the $j$-th internal vertex of $P_L$ (when traversing $P_L$ from $s$ to $t$) the point of $S_R \setminus \convR(S)$ with the $j$-th smallest $y$-coordinate. Since each of such paths is $y$-monotone, it is not self-crossing. Also, no edge of $P_L$ crosses an edge of $P_R$, as the drawing of $P_R$ in $\Gamma$ coincides with $\mathcal{E}_R(S)$. 

If the base case does not hold, then we distinguish two cases based on whether both $S_L$ and $S_R$ contain a vertex different from $p_s$ and $p_t$ (\textbf{Case A}), or only one of them does (\textbf{Case~B}). In the following, we assume that in \textbf{Case B} the set $S_R$ contains a vertex different from $p_s$ and $p_t$, the case in which only $S_L$ contains a vertex different from $p_s$ and $p_t$ can be treated symmetrically. More formally, in \textbf{Case A} we have that $\{p_s,p_t\} \subset S_L$ and $\{p_s,p_t\} \subset S_R$, whereas in \textbf{Case~B} we have that $S_L = \{p_s,p_t\}$ and $\{p_s,p_t\} \subset S_R$. Note that, in \textbf{Case~B}, since the conditions of the base case do not apply and by the hypothesis of the statement, we have that $|P_R| > |\convR(S)|$ holds.

\begin{figure}[b!]
    \centering
    \begin{subfigure}[b]{0.3\textwidth}
    \centering
        \includegraphics[width=\textwidth, page=2]{img/MonotoneSTCycle.pdf}
        \subcaption{}
        \label{fig:2paths-case-A1-setup}
    \end{subfigure}
    \hfill
    \begin{subfigure}[b]{0.3\textwidth}
    \centering
        \includegraphics[width=\textwidth, page=3]{img/MonotoneSTCycle.pdf}
        \subcaption{}
        \label{fig:2paths-case-A1-before}
    \end{subfigure}
    \hfill
    \begin{subfigure}[b]{0.3\textwidth}
    \centering
        \includegraphics[width=\textwidth, page=4]{img/MonotoneSTCycle.pdf}
        \subcaption{}
        \label{fig:2paths-case-A1-after}
    \end{subfigure}
    \caption{Illustrations for {\bf Case A1} in the proof of \cref{th:two-paths}.
    (a) $\convL(S)$ contains a point $p$ different from $p_s$ and $p_t$. (b) An UPSE $\Gamma'$ of the graph $G' = P'_L \cup P_R$  on the pointset $S' = S \setminus \{p\}$. (c) The UPSE $\Gamma$ of $G$ on $S$ obtained by the $p$-leftward-outer-extension of $\Gamma'$.}
    \label{fig:2paths-case-A1}
\end{figure}

If \textbf{Case A} holds, we distinguish two subcases. In \textbf{Case A1}, it holds $|P_L| \geq |S_L|$, whereas
in \textbf{Case A2}, it holds $|P_L| < |S_L|$. We discuss \textbf{Case A1} (see \cref{fig:2paths-case-A1}); \textbf{Case A2} can be treated symmetrically, given that in this case it holds that $|P_R| \geq |S_R|$. 

Suppose that {\bf Case A1} holds true. Then $\convL(S)$ contains 
a point $p$ different from $p_s$ and $p_t$; see \cref{fig:2paths-case-A1-setup}. Since by the hypotheses of this case $|P_L| \geq |S_L| \geq |\convL(S)|$ and $|\convL(S)| \geq 3$, we have that $P_L$ contains at least one internal vertex. Let $S' = S \setminus \{p\}$, let $P'_L$ be an $st$-path with $|P'_L|=|P_L|-1$, and let  $G'$ be the planar $st$-graph $P'_L \cup P_R$.
Since $|\convL(S')| \leq |S_L| - 1$  and since $|S_L| \leq |P_L|$, we have that $|\convL(S')| \leq |P_L|-1 = |P'_L|$. 
Thus, the graph $G'$ and the pointset $S'$ satisfy the conditions of the statement. 
By induction, we have that $G'$ admits an \UPSE $\Gamma'$ on~$S'$, see \cref{fig:2paths-case-A1-before}. 

We show how to modify $\Gamma'$ to obtain an \UPSE $\Gamma$ of $G$ on $S$ as follows; see \cref{fig:2paths-case-A1-before,fig:2paths-case-A1-after}. The drawing of $P_R$ is the same in $\Gamma$ as in $\Gamma'$. 
Let $h_p$ be the horizontal line passing through $p$. Since $\Gamma'$ is an \UPSE of $G'$ on $S'$ and since $y(p_s) < y(p) < y(p_t)$, we have that $h_p$ intersects the drawing of $P'_L$ in a single point. Such a point belongs to a segment that is the image of an edge $e_p$ of $P'_L$. Let $d$ and $q$ be the extremes of such a segment that are the images of the tail and of the head of $e_p$ in $\Gamma'$, respectively. We show how to modify the drawing of $P'_L$ to obtain a $y$-monotone drawing of $P_L$ that does not intersect $P_R$.
To this aim, we replace the drawing of $e_p$ with the $y$-monotone polyline composed of the segments $\overline{dp}$ and $\overline{pq}$. Note that such a polyline lies in the interior of the region delimited by the segment $\overline{d q}$ (representing $e_p$) and by the horizontal rays originating at $d$ and $q$ and directed leftward. Due to the fact that $P'_L$ is represented as a $y$-monotone polyline in $\Gamma'$, such a region is not traversed by the drawing of any edge. Thus, $\Gamma$ is an \UPSE of $G$ on $S$. 
We refer to the described procedure as the \emph{$p$-leftward-outer-extension} of $\Gamma'$; a \emph{$p$-rightward-outer-extension} of $\Gamma'$ is defined symmetrically.

If \textbf{Case B} holds, recall that $S_L = \{p_s,p_t\} \subset S_R$, and since the base case does not apply, we have that $|P_R| > |\convR(S)|$. Let $p$ be any point in $\convR(S) \setminus \{p_s,p_t\}$ and $S' = S \setminus \{p\}$. 
By the conditions of \textbf{Case B}, the path $P_R$ contains at least one internal vertex. We let $P'_R$ be an $st$-path with $|P'_R|=|P_R|-1$, and we let $G'$ be the $st$-graph $P_L \cup P'_R$.
We distinguish two cases based on the size of $\convR(S')$.
In \textbf{Case B1}, it holds $|P'_R| \geq |\convR(S')|$, whereas in \textbf{Case B2}, it holds $|P'_R| < |\convR(S')|$.

\begin{figure}[t!]
    \centering
    \begin{subfigure}[b]{0.3\textwidth}
    \centering
        \includegraphics[width=\textwidth, page=5]{img/MonotoneSTCycle.pdf}
        \subcaption{}
        \label{fig:2paths-case-B2-setup}
    \end{subfigure}
    \hfill
    \begin{subfigure}[b]{0.3\textwidth}
    \centering
        \includegraphics[width=\textwidth, page=7]{img/MonotoneSTCycle.pdf}
        \subcaption{}
        \label{fig:2paths-case-B2-before}
    \end{subfigure}
    \hfill
    \begin{subfigure}[b]{0.3\textwidth}
    \centering
        \includegraphics[width=\textwidth, page=6]{img/MonotoneSTCycle.pdf}
        \subcaption{}
        \label{fig:2paths-case-B2-after}
    \end{subfigure}
    \caption{Illustrations for {\bf Case B2} in the proof of \cref{th:two-paths}. In this case, $S_L = \{p_s,p_t\}$, $|P_R| > |\convR(S)|$, and $|P'_R| < |\convR(S')|$ hold. (a) The triangle $\Delta p^+ p p^-$ is shaded yellow. (b) An UPSE $\Gamma^*$ of the graph $G^* = P_L \cup P^*_R$ on the pointset $S^* = S \setminus X^*$. (c) The UPSE of $G$ on $S$ obtained from $\Gamma^*$ by modifying the drawing of $P^*_R$ between $p^-$ and $p^+$ to use the points in $X^*$.
    }
    \label{fig:2paths-case-B2}
\end{figure}

In \textbf{Case B1}, the pair $(G',S')$ satisfies the conditions of the statement. In particular, it either matches the conditions of the base case or again those of {\bf Case B}.
Thus, since $|S'| = |S| - 1$ (and $|V(G')| = |V(G)| - 1$), we can inductively construct an \UPSE $\Gamma'$ of $G'$ on $S'$, and obtain an \UPSE of $G$ on $S$ via a $p$-rightward-outer-extension of $\Gamma'$.

In \textbf{Case B2}, which is the most interesting, we proceed as follows; see \cref{fig:2paths-case-B2}. Let $p^+$ be the point of $\convR(S)$ with the smallest $y$-coordinate and above $p$ and let $p^-$ be the point of $\convR(S)$ with the largest $y$-coordinate and below $p$. Let $X$ be the set of points of $S$ that lie in the interior of the triangle $\Delta p^+ p p^-$, including $p^+$ and $p^-$ and excluding $p$. Clearly, the right envelope of $\conv(X)$ forms a subpath of the right envelope of $\conv(S')$.
The set $\convR(X)$ consists of $p^-$, $p^+$, and of $k$ vertices not belonging to $\convR(S)$, depicted as squares in  \cref{fig:2paths-case-B2-setup}. Denote by $k^* = |P_R| - |\convR(S)|$ the number of points in the interior of $\conv(S)$ that need to be the image of a vertex of $P_R$ in an \UPSE of $G$ on $S$. Observe that $k > k^* > 0$ holds true. Indeed, $k^* > 0$ holds true since $(G,S)$ does not satisfy the conditions of the base case, and $k > k^*$ holds true since $(G,S)$ does not satisfy the conditions of {\bf Case B1}.
Let $p^{\wedge}$ be the point of $\convR(S')$ with the smallest $y$-coordinate and above $p$, and let $p^{\vee}$ be the point with the largest $y$-coordinate and below $p$.
Up to renaming, let $a_0 = p^+, a_1,\dots, a_\alpha = p^{\wedge}$ be  the subsequence of points of $\mathcal{E}_R(X)$ encountered when traversing $\mathcal{E}_R(X)$ from $p^+$ to $p^{\wedge}$ and observe that these points have decreasing $y$-coordinates. 
Similarly, let $b_0 = p^-, b_1,\dots, b_\gamma = p^{\vee}$ be the subsequence of points of $\mathcal{E}_R(X)$ encountered when traversing $\mathcal{E}_R(X)$ from $p^-$ to $p^{\vee}$  and observe that these points have increasing $y$-coordinates. 
We let the set $X^* \subset \convR(X)$ be $X^* =X^*_{\wedge} \cup X^*_{\vee}$, where $X^*_{\wedge}$ and $X^*_{\vee}$ are defined, based on the value of $k^*$, as follows. If $k^*\leq \alpha$, then let $X^*_{\wedge}= \{a_i| 1 \leq i \leq k^*\}$ and $X^*_{\vee}=\varnothing$, otherwise let $X^*_{\wedge}= \{a_i| 1 \leq i \leq \alpha\}$ and $X^*_{\vee} = \{b_i| 1 \leq i \leq k^*-\alpha\}$. 


%

\noindent
Observe that $|X^*| = k^*$. Also, by the definition of $k^*$, the path $P_R$ contains $\convR(S) -2 + k^*$ internal vertices and since $\convR(S)\geq 3$ in {\bf Case B}, we have that $P_R$ contains at least $k^* + 1$ internal vertices. 


Let $S^* = S \setminus X^*$, let $P^*_R$ be an $st$-path with $|P_R|-k^*$ vertices, and let $G^*$ be the $st$-graph $P_L \cup P^*_R$. We have that the pair $(G^*,S^*)$ satisfies the conditions of the statement, and in particular the base case. In fact,  $|P^*_R| = |P_R| - k^*$, and by the definition of $k^*$, we have that $|P_R| - k^* = |\convR(S)|$. Moreover,  by construction, $\convR(S)  = \convR(S^*)$, since the vertices of $X^*$ lie in the interior of $\conv(S)$. Thus, since $|S^*| = |S| - k^*$, by induction $G^*$ admits an \UPSE $\Gamma^*$ on $S^*$; see \cref{fig:2paths-case-B2-before}. 

We now show how to transform $\Gamma^*$ into an \UPSE $\Gamma$ of $G$ on $S$. Since the base case applies to $(G^*,S^*)$, we have that the endpoints of the edges of $P^*_R$ are consecutive along~$\mathcal{E}_R(S)$. In particular, there exist two adjacent edges $e^-$ and $e^+$ of $P^*_R$ such that the tail of $e^-$ is mapped to $p^-$, the head of $e^-$, which is the tail of $e^+$, is mapped to $p$, and the head of $e^+$ is mapped to $p^+$. Therefore, the \UPSE $\Gamma$ of $G$ on $S$ can be obtained from $\Gamma^*$ as follows; see \cref{fig:2paths-case-B2-after}.  We initialize $\Gamma= \Gamma^*$. 
The drawing of $P_L$ is the same in $\Gamma$ as in $\Gamma^*$. Next, we show how to modify the drawing of $P^*_R$ to obtain a $y$-monotone drawing $P_R$ that does not intersect the drawing of $P_L$ and uses the same points as $P^*_R$ and the points in $X^*$.
To this aim, we replace the drawing of $e^+$ with the (unique) $y$-monotone polyline connecting $p$ and $p^+$ that passes through all the points in $X^*_{\wedge}$. 
Also, we replace the drawing of $e^-$ with the (unique) $y$-monotone polyline connecting $p^-$ and $p$ that passes through all the points in $X^*_{\vee}$; note that $X^*_{\vee}$ might be empty, in which case the polyline still coincides with the drawing of $e^-$. This concludes the construction of $\Gamma$. To see that $\Gamma$ is an \UPSE of $G$ on $S$ observe that the above polylines (i) are each non-self-crossing, as they are $y$-monotone, (ii) do not cross with each other as they entirely lie either above or below $p$ (and only meet at $p$), and (iii) do not cross any edge of $\Gamma'$ as they lie in the region $F$ (shaded gray in \cref{fig:2paths-case-B2-before,fig:2paths-case-B2-after}) obtained by subtracting from the triangle $\Delta p^+ p  p^-$ (interpreted as a closed region) all the points of $\conv(X)$. Indeed, observe that in $\Gamma^*$, the region $F$ is not traversed by any edge and that the only points of $S^*$ that lie on the boundary of $F$ are $p$ and the points in $\convR(X) \setminus X^*$. 
\end{proof}

\section{Enumerating Non-Crossing Monotone Hamiltonian Cycles}\label{se:geometry}

\cref{th:two-paths} allows us to test whether an $n$-vertex planar $st$-graph $G$ composed of two $st$-paths can be embedded as a non-crossing monotone Hamiltonian cycle on a set $S$ of $n$ points. We now show an efficient algorithm for enumerating {\em all} the non-crossing monotone Hamiltonian cycles on $S$. \cref{fig:exponential-monotone-cycles} shows two non-crossing monotone Hamiltonian cycles on a pointset.

\begin{figure}[t!]
    \centering
    \begin{subfigure}[b]{0.3\textwidth}
    \centering
        \includegraphics[scale=.75, page=10]{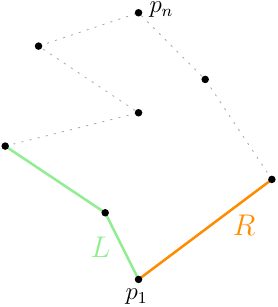}
        \subcaption{}
    \end{subfigure}
    \hfill
    \begin{subfigure}[b]{0.3\textwidth}
    \centering
        \includegraphics[scale=.75, page=11]{img/HamiltonianCycle.pdf}
        \subcaption{}
    \end{subfigure}
    \hfill
    \begin{subfigure}[b]{0.3\textwidth}
    \centering
        \includegraphics[scale=.75, page=12]{img/HamiltonianCycle.pdf}
        \subcaption{}
    \end{subfigure}
    \caption{Two non-crossing monotone Hamiltonian cycles on the same pointset.}
    \label{fig:exponential-monotone-cycles}
\end{figure}

\begin{theorem}\label{th:hamiltonian-enumeration}
Let $S$ be a set of $n$ points. It is possible to enumerate all the non-crossing monotone Hamiltonian cycles on $S$ with $\bigO(n)$ delay, using $\bigO(n^2)$ space, after $\bigO(n^2)$ set-up time.  
\end{theorem}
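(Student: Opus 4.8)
The plan is to enumerate non-crossing monotone Hamiltonian cycles on $S$ by a depth-first traversal of an auxiliary acyclic digraph $\mathcal{D}$ whose nodes correspond to the ``frontiers'' of partial cycles, in the same spirit as the enumeration algorithm behind \cref{th:st-enumeration}, but with the frontier size bounded by a constant rather than by~$k$. Concretely, process the points of $S$ in increasing order of $y$-coordinate, say $p_1,\dots,p_n$. Any non-crossing monotone Hamiltonian cycle decomposes into two $y$-monotone paths sharing the extremes $p_1$ and $p_n$; after scanning the lowest $i$ points, the partial drawing restricted to $\{p_1,\dots,p_i\}$ consists of two monotone sub-paths (one per side) plus at most two ``pending'' edges leaving upward from the highest processed point $p_i$. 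So the state after step $i$ can be recorded by: which point of $S$ is the upper endpoint of each of the two pending edges, together with a bit recording whether $p_i$ is currently an interior vertex on the left path, the right path, or is still the turning point. This gives $\bigO(n^2)$ states in total, hence $\bigO(n^2)$ nodes in $\mathcal{D}$; each node has $\bigO(n)$ out-edges (choices for where the next upward edge goes), so $\mathcal{D}$ has $\bigO(n^2)$ size and can be built in $\bigO(n^2)$ set-up time once all the relevant segment non-crossing tests have been resolved.

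Next I would verify correctness exactly as for \cref{th:st-enumeration}: (i) distinct $n_{\mathcal S}n_{\mathcal T}$-paths of $\mathcal{D}$ yield distinct cycles, because the first node at which two such paths diverge fixes the current highest point and the set of pending edges, so the two successors must differ in the placement of the next edge and hence produce different drawings; and (ii) every non-crossing monotone Hamiltonian cycle $C$ on $S$ is realized by some path of $\mathcal{D}$, by an induction on $i$ showing that the state of $C$ after its lowest $i$ points is a node of $\mathcal{D}$ reachable from $n_{\mathcal S}$. The key geometric fact making the frontier suffice — the analogue of the observation preceding \cref{th:k-paths} — is that whether a partial monotone drawing extends depends only on the two pending edges and the already-scanned points below the sweep line, not on the full routing of the lower paths, because monotonicity forces every future edge to lie above the current sweep line and non-crossing with a pending edge can be decided from its two endpoints.

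The DFS traversal then enumerates all cycles: initialize the current drawing to the edges of some state reachable from $n_{\mathcal S}$; when traversing an edge of $\mathcal{D}$ add the (at most two) new segments, when backtracking remove them, and output the drawing whenever $n_{\mathcal T}$ is reached. Between two consecutive outputs at most two root-to-leaf branches of the DFS tree are traversed, each of length $\bigO(n)$, and along them a total of $\bigO(n)$ edges are added to or removed from the current drawing, giving $\bigO(n)$ worst-case delay. Space is dominated by $\mathcal{D}$ itself, which is $\bigO(n^2)$.

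The main obstacle I expect is pinning down precisely what data the frontier must carry so that (a) the state space stays $\bigO(n^2)$ and (b) the ``extension only depends on the frontier'' lemma genuinely holds — in particular handling the turning point $p_n$ correctly (only one of the two monotone paths may still be ``open'' at the top), ensuring that the two pending edges are routed to the \emph{correct} sides so that the left path stays left of the right path throughout, and making sure the non-crossing check against \emph{already drawn} lower edges can be localized. A secondary technical point is that, since the points are not assumed to be in general position, collinearities must be treated with care when deciding which candidate next edges create a crossing or a degenerate (overlapping) segment; but these are finitely many tests, all computable within the claimed $\bigO(n^2)$ set-up budget.
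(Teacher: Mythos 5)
Your algorithmic skeleton is the same as the paper's: sweep the points by increasing $y$-coordinate, maintain the two partial monotone paths (the paper calls the pair a \emph{bipath}), branch on whether $p_{i+1}$ is appended to the left or to the right path, and run a DFS that achieves $\bigO(n)$ delay because every surviving node is guaranteed to reach a complete solution. But the proposal is missing the one ingredient that makes this guarantee true: a concrete, efficiently checkable criterion for \emph{which} partial configurations extend to a non-crossing monotone Hamiltonian cycle. The paper's \cref{le:safe-is-extensible} supplies exactly this: a bipath whose higher path-top is $p_{\ell(B)}$ and lower path-top is $p_{r(B)}$ is extensible if and only if the open region strictly to the right of the rightmost ray from $p_{r(B)}$ through a point of $S_{\ell(B)}\setminus S_{r(B)}$, and strictly above the horizontal line through $p_{\ell(B)}$, contains a point of $S$ (symmetrically on the left). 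Proving both directions of this equivalence is the heart of the argument, and it is what the $\bigO(n^2)$ table is actually storing. You only assert that ``whether a partial monotone drawing extends depends only on the two pending edges and the already-scanned points below the sweep line''; besides being unproved, this is not even the right shape of statement --- extensibility hinges on the existence of a point of $S$ \emph{above} the sweep line in a specific region, not on the points below it. Without this lemma you cannot prune dead branches, and the $\bigO(n)$ delay claim does not follow.

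A second, independent gap concerns your decision to merge all partial configurations with the same frontier into a single node of an $\bigO(n^2)$-size DAG. For that to be sound you need a localization lemma: that the test ``does the candidate segment $\overline{p_{r(B)}\,p_{i+1}}$ cross the already-drawn left path?'' gives the same answer for every bipath realizing that frontier, even though those bipaths route the lower portion of the left path (in particular the edge of $L$ entering $p_{r(B)+1}$ from below) differently. You flag this yourself as an obstacle but do not resolve it. The paper avoids the issue entirely: its search structure is an implicit \emph{tree} whose nodes are full bipaths (potentially exponentially many), the crossing test is performed against the actual path stored at the current node, and the $\bigO(n^2)$ space bound comes from the precomputed safety table, not from bounding the number of states. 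Note also that your size accounting is internally inconsistent ($\bigO(n^2)$ nodes each with $\bigO(n)$ out-edges cannot give an $\bigO(n^2)$-size digraph); in a bottom-up sweep of a monotone cycle each frontier in fact has only two candidate successors, since $p_{i+1}$ must attach to one of the two current path-tops.
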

Let $p_1,\dots,p_{n}$ be the points of $S$, ordered by increasing $y$-coordinates. This order can be computed in $\bigO(n\log n)$ time. For $i\in[n]$, let $S_i=\{p_1,\dots,p_i\}$. A \emph{bipath} $B$ on $S_i$ consists of two non-crossing monotone paths $L$ and $R$ on $S_i$, each of which might be a single point, such that (see \cref{fig:bipaths}):
\begin{enumerate}
    \renewcommand{\theenumi}{\roman{enumi}} 
    \renewcommand{\labelenumi}{\textbf{(\theenumi)}} 
\item $L$ and $R$ start at $p_1$;
\item each point of $S_i$ is the image of an endpoint of a segment of $B$; and
\item if $L$ and $R$ both have at least one segment, then $L$ is to the left of $R$.
\end{enumerate}

\begin{figure}[t!]
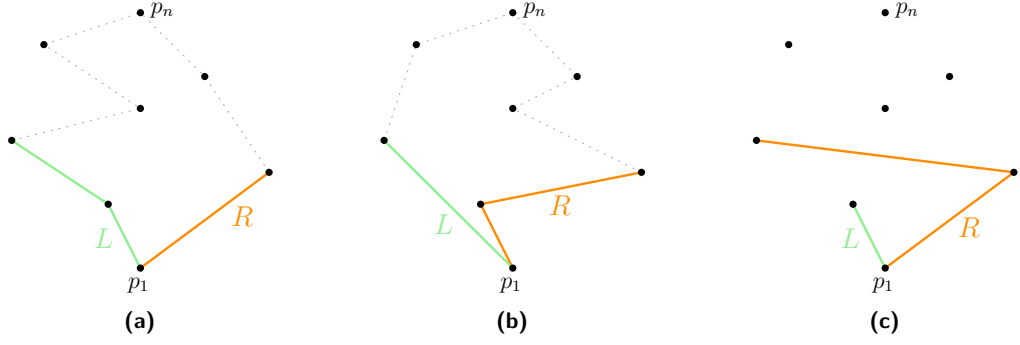

    \centering
    \begin{subfigure}[b]{0.3\textwidth}
    \centering
        \includegraphics[scale=.75, page=1]{img/HamiltonianCycle.pdf}
        \subcaption{}
    \end{subfigure}
    \hfill
    \begin{subfigure}[b]{0.3\textwidth}
    \centering
        \includegraphics[scale=.75, page=2]{img/HamiltonianCycle.pdf}
        \subcaption{}
    \end{subfigure}
    \hfill
    \begin{subfigure}[b]{0.3\textwidth}
    \centering
        \includegraphics[scale=.75, page=3]{img/HamiltonianCycle.pdf}
        \subcaption{}
    \end{subfigure}
    \caption{Three bipaths on $S_4$. The first two bipaths are extensible, while the third one is not.
    Dotted lines complete non-crossing Hamiltonian cycles on $S$ whose restriction to $S_4$ is the bipath.}
    \label{fig:bipaths}
\end{figure}

We say that a bipath $B$ is \emph{extensible} if there exists a non-crossing monotone Hamiltonian cycle on $S$ whose restriction to $S_i$ is $B$. 
Consider a bipath $B$ on $S_i$ with $1<i<n$. Let $p_{\ell(B)}$ and $p_{r(B)}$ be the endpoints of $L$ and $R$ with the highest $y$-coordinate, respectively. First, suppose that $\ell(B)> r(B)$, that is, $p_{\ell(B)}$ is higher than $p_{r(B)}$. Then note that $\ell(B)=i$; also, it might be that $r(B)=1$. Consider the ray $\rho(p_{r(B)},S_{\ell(B)}\setminus S_{r(B)})$; recall that this is the rightmost ray starting at $p_{r(B)}$ and passing through a point of $S_{\ell(B)}\setminus S_{r(B)}$. We denote by $\mathcal{R}(B)$ the open region of the plane strictly to the right of $\rho(p_{r(B)},S_{\ell(B)}\setminus S_{r(B)})$ and strictly above the horizontal line through $p_{\ell(B)}$; see \cref{fig:rb}. Similarly, if $p_{r(B)}$ is higher than $p_{\ell(B)}$, then $\mathcal{L}(B)$ is the open region of the plane strictly to the left of the leftmost ray $\ell(p_{\ell(B)},S_{r(B)}\setminus S_{\ell(B)})$ from $p_{\ell(B)}$ through a point of $S_{r(B)}\setminus S_{\ell(B)}$ and strictly above the horizontal line through $p_{r(B)}$; see \cref{fig:lb}.

\begin{figure}[t!]
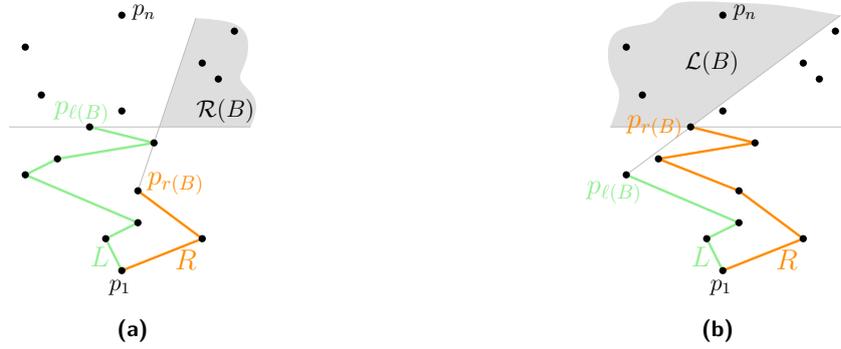

    \centering
    \begin{subfigure}[b]{0.45\textwidth}
    \centering
        \includegraphics[scale=.75, page=4]{img/HamiltonianCycle.pdf}
        \subcaption{}
    \label{fig:rb}
    \end{subfigure}
    \hfill
    \begin{subfigure}[b]{0.45\textwidth}
    \centering
        \includegraphics[scale=.75, page=5]{img/HamiltonianCycle.pdf}
        \subcaption{}
    \label{fig:lb}
    \end{subfigure}
    \caption{(a) Region $\mathcal{R}(B)$ for a bipath $B$ with $\ell(B)>r(B)$. (b) Region $\mathcal{L}(B)$ for a bipath $B$ with $r(B)>\ell(B)$.}
    \label{fig:regions}
\end{figure}

For any $i\in [n-1]$, we say that a bipath $B$ on $S_i$ is \emph{safe} if:
\begin{enumerate}
    \renewcommand{\theenumi}{\roman{enumi}} 
    \renewcommand{\labelenumi}{\textbf{(\theenumi)}} 
    \item $i=1$; or
    \item $i>1$, $p_{\ell(B)}$ is higher than $p_{r(B)}$, and $|\mathcal{R}(B)\cap S|\geq 1$; or 
    \item $i>1$, $p_{r(B)}$ is higher than $p_{\ell(B)}$, and $|\mathcal{L}(B)\cap S|\geq 1$.
\end{enumerate}
We have the following lemma.
\begin{lemmax} \label[lemmax]{le:safe-is-extensible}
A bipath $B$ is extensible if and only it is safe. 
\end{lemmax}

\begin{proof}
    First, we prove the necessity. Suppose that $B$ is extensible and let $C$ be any non-crossing monotone Hamiltonian cycle on $S$ whose restriction to $S_i$ is $B$. Also suppose, for a contradiction, that $B$ is not safe, which implies that $i>1$. Assume that $p_{\ell(B)}$ is higher than $p_{r(B)}$, as the other case is symmetric. Then we have $\mathcal{R}(B) \cap S = \varnothing$. Let $\overline{p_{r(B)}p'_{r(B)}}$ be the segment of $C$ such that $y(p'_{r(B)}) > y(p_{r(B)})$. Since all points in $S_{\ell(B)}\setminus S_{r(B)}$ belong to $L$, we have $p'_{r(B)}$ lies strictly above the horizontal line through $p_{\ell(B)}$. This, together with the fact that 
    $S$ contains no point strictly above the horizontal line through $p_{\ell(B)}$ and to the right of the ray $\rho(p_{r(B)},S_{\ell(B)}\setminus S_{r(B)})$, implies that the ray $\rho(p_{r(B)}p'_{r(B)})$ lies to the left of the ray $\rho(p_{r(B)},S_{\ell(B)}\setminus S_{r(B)})$, which implies that the segment $(p_{r(B)},p'_{r(B)})$ crosses the path $L$, a contradiction to the fact that $C$ is non-crossing.   

\begin{figure}[t!]
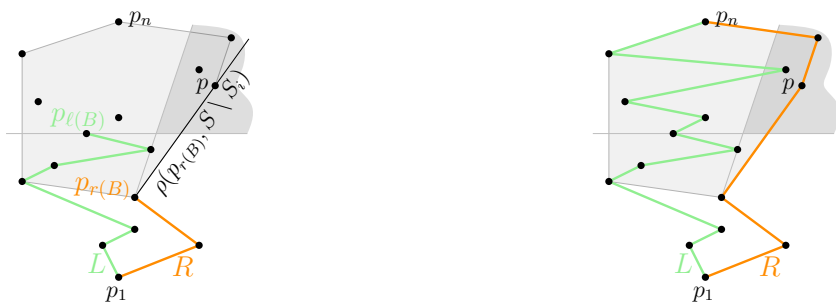

    \centering
    \begin{subfigure}[b]{0.45\textwidth}
    \centering
        \includegraphics[scale=.75, page=6]{img/HamiltonianCycle.pdf}
    \end{subfigure}
    \hfill
    \begin{subfigure}[b]{0.45\textwidth}
    \centering
        \includegraphics[scale=.75, page=7]{img/HamiltonianCycle.pdf}
    \end{subfigure}
    \caption{Since the point $p$ on the ray $\rho(p_{r(B)},S\setminus S_i)$ defines a segment $\overline{p_{r(B)}p}$ which is on the boundary of the convex hull of $S \setminus S_{r(B)-1}$ (the convex hull is shaded light-gray), we can complete $R$ via the boundary of the convex hull and $L$ via the remaining points.}
    \label{fig:safe-extensible}
\end{figure}
    Second, we prove the sufficiency. Suppose that $B$ is safe. We show how to construct a non-crossing monotone Hamiltonian cycle $C$ on $S$ whose restriction to $S_i$ is $B$.  
    Assume first that $i=1$. Then $C$ can be constructed as the union of two monotone paths. The first path is one of the two paths between $s$ and $t$ on the boundary of the convex hull of $S$. The second path from $s$ to $t$ traverses all the points of $S$ that are not on the first path, in increasing order of $y$-coordinate. 
    Assume next that $i>1$ and refer to \cref{fig:safe-extensible}. Assume also that $p_{\ell(B)}$ is higher than $p_{r(B)}$, as the other case is symmetric. Then $\mathcal{R}(B)$ contains some points of $S$. Consider the rightmost ray $\rho(p_{r(B)},S\setminus S_i)$ starting from $p_{r(B)}$ and passing through a point $p$ in $S\setminus S_i$. Observe that $\overline{p_{r(B)}p}$ is a segment on the boundary of the convex hull of $S \setminus S_{r(B)-1}$. Hence, we can augment $R$ so that it becomes a monotone path from $s$ to $t$, by adding to it the part of the boundary of the convex hull of $S \setminus S_{r(B)-1}$ from $p_{r(B)}$ to $t$ (by proceeding in counter-clockwise direction on this boundary from $p_{r(B)}$ to $t$). Also, we can augment $L$ so that it becomes a monotone path from $s$ to $t$ by making it pass through all the points in $S\setminus S_i$ that are not used by $R$, and finishing at $t$.
\end{proof}

We now describe our algorithm. The algorithm implicitly defines and explores a search tree $T$. The leaves of $T$ have level $n$ and correspond to non-crossing monotone Hamiltonian cycles on $S$. The internal nodes at level $i$ correspond to extensible bipaths on $S_i$ and have at most two children each. The exploration of $T$ performed by our enumeration algorithm is a depth-first traversal. When a node $\mu$ is visited, the number of its children is established. If $\mu$ has at least one child, the visit proceeds with any child of $\mu$. Otherwise, $\mu$ is a leaf; then the visit proceeds with any unvisited child of the ancestor of $\mu$ that has largest level, among the ancestors of $\mu$ that have unvisited children.

\begin{itemize}
    \item The algorithm starts at the root of $T$, which corresponds to the (unique) safe bipath on~$S_1$. 

    \item At each node $\mu$ at level $i\in [n-2]$ of $T$, corresponding to a bipath $B(\mu)$, we construct either one or two bipaths on $S_{i+1}$, associated with either one or two children of $\mu$, respectively. Let $L(\mu)$ and $R(\mu)$ be the left and right non-crossing monotone paths composing $B(\mu)$, respectively, and let $p_{\ell(B(\mu))}$ and $p_{r(B(\mu))}$ be the endpoints of $L(\mu)$ and $R(\mu)$ with the highest $y$-coordinate, respectively. If $\overline{p_{\ell(B(\mu))} p_{i+1}}$ does not cross $R(\mu)$, then let $B_L=B(\mu)\cup \overline{p_{\ell(B(\mu))} p_{i+1}}$. We test whether $B_L$ is a safe bipath and, in the positive case, add to $\mu$ a child $\mu_L$ corresponding to $B_L$. Analogously, if $\overline{p_{r(B(\mu))} p_{i+1}}$ does not cross $L(\mu)$, then we test whether $B_R=B(\mu)\cup \overline{p_{r(B(\mu))} p_{i+1}}$ is a safe bipath and, in the positive case, add to $\mu$ a child $\mu_R$ corresponding to $B_R$. Note that the algorithm guarantees that each node at a level smaller than or equal to $n-1$ of $T$ is safe, and thus, by \cref{le:safe-is-extensible}, extensible.

    \item Finally, at each node $\mu$ at level $n-1$, we add a leaf $\lambda$ to $\mu$ corresponding to the non-crossing monotone Hamiltonian cycle $B(\mu)\cup \overline{p_{\ell(B(\mu))} p_n} \cup \overline{p_{r(B(\mu))} p_n}$. Note that, since $\mu$ is extensible, such a cycle is indeed non-crossing.
\end{itemize}

In order to complete the proof of \cref{th:hamiltonian-enumeration}, we show what follows: 
\begin{enumerate}
    \renewcommand{\theenumi}{\roman{enumi}} 
    \renewcommand{\labelenumi}{\textbf{(\theenumi)}} 
    \item \label{prop:level-leaves} Each node of $T$ at level $i\neq n$ is internal.
     \item \label{prop:leaf-cycle} Each leaf corresponds to a non-crossing monotone Hamiltonian cycle on $S$.
    \item \label{prop:distinct-cycle} Distinct leaves correspond to different non-crossing monotone Hamiltonian cycles on $S$.
    \item \label{prop:all-cycles} For each non-crossing monotone Hamiltonian cycle on $S$, there exists a leaf of $T$ corresponding to it.
    \item \label{prop:space-time} Using $\bigO(n^2)$ pre-processing time and $\bigO(n^2)$ space, the algorithm enumerates each non-crossing monotone Hamiltonian cycle on $S$ with $\bigO(n)$ delay.
\end{enumerate}

\begin{itemize}
\item  To prove \cref{prop:level-leaves}, we show that the leaves of $T$ have all level $n$. Consider a node $\mu$ of $T$ with level $i<n-1$, we prove that it has a child in $T$. Recall that $B(\mu)$ is safe, otherwise it would not had been added to $T$, and thus, by \cref{le:safe-is-extensible}, it is extensible. Hence, there exists a non-crossing monotone Hamiltonian cycle $C$ on $S$ whose restriction to $S_i$ is $B(\mu)$. Also, the restriction of $C$ to $S_{i+1}$ is a bipath $B'(\mu)$ on $S_{i+1}$ which coincides with $B(\mu)$, except that it contains either the segment $\overline{p_{\ell(B(\mu))} p_{i+1}}$ or the segment $\overline{p_{r(B(\mu))} p_{i+1}}$. Since $B'(\mu)$ is the restriction of $C$ to $S_{i+1}$, it is extensible and thus, by \cref{le:safe-is-extensible}, it is safe. It follows that $\mu$ has a child corresponding to $B'(\mu)$, which is inserted in $T$ when adding either the segment $\overline{p_{\ell(B(\mu))} p_{i+1}}$ or the segment $\overline{p_{r(B(\mu))} p_{i+1}}$ to $B(\mu)$. The proof that a node with level $n-1$ is not a leaf is analogous.

\item To prove \cref{prop:leaf-cycle}, consider a leaf $\lambda$ and its parent $\mu$ in $T$. Note that $\mu$ is associated with a safe bipath $B(\mu)$ on $S_{n-1}$; by \cref{le:safe-is-extensible}, we have that $B(\mu)$ is extensible. Since $B(\mu)$ is extensible, the (unique) monotone Hamiltonian cycle on $S$ whose restriction to $S_{n-1}$ is $B(\mu)$ is non-crossing. This cycle corresponds to $\lambda$ and is added to $T$ when visiting~$\mu$.

\item To prove \cref{prop:distinct-cycle}, suppose for a contradiction that there exist two leaves $\lambda_1$ and $\lambda_2$ associated with two monotone Hamiltonian cycles $C_1$ and $C_2$, respectively, with $C_1=C_2$. Let $\mu$ be the lowest common ancestor of $\lambda_1$ and $\lambda_2$ in $T$. Let $j$ be the level of $\mu$. Denote by $\mu_i$ the child of $\mu$ leading to $\lambda_i$, with $i\in \{1,2\}$. By the construction of $T$, we have that exactly one of the bipaths $B(\mu_1)$ and $B(\mu_2)$ contains the segment $\overline{p_{\ell(B(\mu))}p_{j+1}}$, while the other one contains the segment $\overline{p_{r(B(\mu))}p_{j+1}}$. This contradicts the fact that $C_1 = C_2$.

\item To prove \cref{prop:all-cycles}, let $C$ be a non-crossing monotone Hamiltonian cycle on $S$. Consider the safe bipath $B$ on $S_{n-1}$ obtained by removing from $C$ the point $p_n$, together with its two incident segments. It suffices to show that $T$ contains a node $\mu$ such that $B = B(\mu)$. In fact, in this case, $\mu$ is an extensible node of level $n-1$ whose unique child in $T$ is the leaf corresponding to $C$.
To prove that $T$ contains such a node $\mu$, we prove by induction that, for every level $i=1,\dots,n-1$, the tree $T$ contains a node corresponding to the restriction $B_i$ of $B$ to $S_i$.
The base case trivially holds.
For the inductive case, suppose that $T$ contains a node $\nu$ whose associated bipath $B(\nu)$ is $B_{i-1}$. Then $B_{i}$ is obtained by adding either the segment $\overline{p_{\ell(B(\nu))}p_i}$ or the segment $\overline{p_{r(B(\nu))}p_i}$ to $B_{i-1}$. Since $B_{i}$ is extensible, by \cref{le:safe-is-extensible} it is safe, and hence $\nu$ has a child in $T$ corresponding to $B_i$. 

\item Finally, we prove \cref{prop:space-time}. To this aim, we compute in $\bigO(n^2)$ time two tables $C$ and $D$. The first one allows us to quickly test whether a bipath on $S_i$ can be extended to a bipath on $S_{i+1}$ (so that no crossing is introduced). The second table allows us to quickly test whether a bipath on $S_i$ is safe. 

We first describe the computation of the table $C$, which has $\bigO(n^2)$ size, can be computed in $\bigO(n^2)$ time, and allows us to answer in $\bigO(1)$ time the following questions: Given a bipath $B$ on $S_i$ composed of the monotone $st$-paths $L$ and $R$ respectively ending at points $p_\ell$ and $p_r$, is $B\cup \overline{p_\ell p_{i+1}}$ a bipath on $S_{i+1}$ and is $B\cup \overline{p_r p_{i+1}}$ a bipath on $S_{i+1}$? That is, the table allows us to test whether the segment $\overline{p_\ell p_{i+1}}$ crosses any edge of $R$ and whether the segment $\overline{p_r p_{i+1}}$ crosses any edge of $L$.  

We only discuss how $C$ allows us to decide whether the segment $\overline{p_\ell p_{i+1}}$ crosses any edge of $R$, as the arguments for deciding whether the segment $\overline{p_r p_{i+1}}$ crosses any edge of $L$ are analogous. If $i=\ell$, then obviously the segment $\overline{p_\ell p_{i+1}}$ does not cross any edge of $R$, as it lies completely above $R$. So in the following we assume that $i=r$, that is, the point $p_\ell$ is lower than $p_r$, which is the highest point of $S_i$. This implies that $R$ contains the polyline $(p_{{\ell}+1}, p_{{\ell}+2}, \dots, p_r)$, as in \cref{fig:planar-test-success}.  

A key point for our efficient test is that whether $B\cup \overline{p_\ell p_{r+1}}$ is a bipath only depends on the points $p_{\ell},p_{{\ell}+1},\dots,p_r,p_{r+1}$, and not on the points lower than $p_{\ell}$. In particular, let $p_x$ be the point of $S_i$ with $x<{\ell}$ such that the segment $\overline{p_xp_{{\ell}+1}}$ belongs to $R$. Then the actual placement of $p_x$ does not matter for whether $\overline{p_\ell p_{r+1}}$ crosses $\overline{p_xp_{{\ell}+1}}$ or not, see \cref{fig:planar-test-fail}. This is formalized in the following claim.

\begin{figure}[t!]
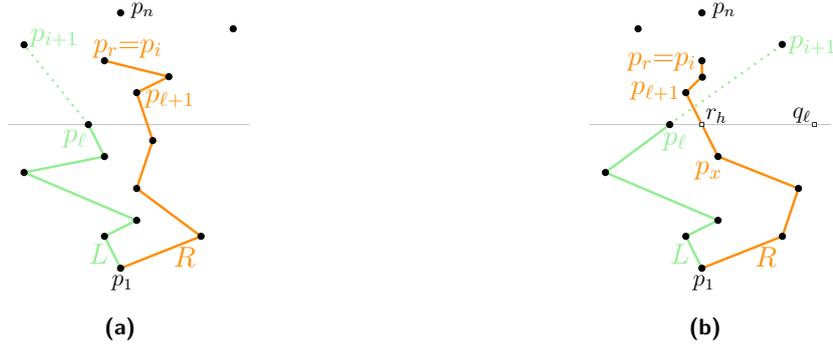

    \centering
    \begin{subfigure}[b]{0.45\textwidth}
    \centering
        \includegraphics[scale=.75, page=13]{img/HamiltonianCycle.pdf}
        \subcaption{}
    \label{fig:planar-test-success}
    \end{subfigure}
    \hfill
    \begin{subfigure}[b]{0.45\textwidth}
    \centering
        \includegraphics[scale=.75, page=14]{img/HamiltonianCycle.pdf}
        \subcaption{}
    \label{fig:planar-test-fail}
    \end{subfigure}
    \caption{Extensibility of a bipath $B$ whose monotone $st$-paths $L$ and $R$ end at points $p_\ell$ and $p_r$ with a segment $\overline{p_\ell p_{i+1}}$. In (a) the segment $\overline{p_\ell p_{i+1}}$ does not cross $B$, while in (b) it does.}
    \label{fig:crossing-test}
\end{figure}

\begin{claimx} \label[claimx]{cl:cross-past}
Let $B$ be a bipath on $S_i$ composed of two monotone $st$-paths $L$ and $R$ ending at points $p_\ell$ and $p_r$, where $\ell<r$, and  let $p_x$ be the point of $S_i$ with $x<{\ell}$ such that the segment $\overline{p_xp_{{\ell}+1}}$ belongs to $R$. Also, let $q_{\ell}$ be any point on the horizontal line $h_{\ell}$ through $p_{\ell}$, to the right of every point in $S$. Then the segment $\overline{p_\ell p_{r+1}}$ crosses $\overline{p_xp_{{\ell}+1}}$ if and only if it crosses $\overline{q_\ell p_{{\ell}+1}}$.        
\end{claimx}
\begin{proof}
Suppose that $\overline{p_\ell p_{r+1}}$ crosses $\overline{p_xp_{{\ell}+1}}$. We prove that $\overline{p_\ell p_{r+1}}$ crosses $\overline{q_\ell p_{{\ell}+1}}$, as well. The proof for the opposite direction is analogous. Let $r_h$ be the intersection point of $\overline{p_xp_{{\ell}+1}}$ with $h_{\ell}$. Since $\overline{p_xp_{{\ell}+1}}$ belongs to $R$, we have that $r_h$ lies to the right of $p_\ell$. This implies that, by rotating a ray $\rho(p_\ell,p_{{\ell}+1})$ starting from $p_\ell$ and passing through $p_{{\ell}+1}$ in clockwise direction, around $p_\ell$, the point $p_{r+1}$ is encountered before $r_h$. It follows that, by rotating $\rho(p_\ell,p_{{\ell}+1})$ in clockwise direction around $p_\ell$, the point $p_{r+1}$ is encountered before $q_{\ell}$, as well, since the ray starting at $p_\ell$ and passing through $q_{\ell}$ is the same as the ray starting at $p_\ell$ and passing through $r_h$. Hence, $\overline{p_\ell p_{r+1}}$ crosses $\overline{q_\ell p_{{\ell}+1}}$.
\end{proof}

A corollary of \cref{cl:cross-past} is that the segment $\overline{p_\ell p_{r+1}}$ crosses a bipath $B$ on $S_i$ composed of two monotone $st$-paths ending at points $p_\ell$ and $p_r$, with $\ell<r$, if and only if it crosses any other bipath $B'$ on $S_i$ composed of two monotone $st$-paths ending at points $p_\ell$ and $p_r$. This is obvious if the crossing involves a segment $\overline{p_yp_{y+1}}$, for some $y\in\{\ell+1,\ell+2,\dots,r-1\}$, as such a segment belongs both to $B$ and to $B'$, whereas it comes from \cref{cl:cross-past} if the crossing involves a segment $\overline{p_xp_{\ell+1}}$ of $B$ or $B'$ with $x<\ell$.

We are now ready to describe the table $C$ and its computation in greater detail. The table $C$ is indexed by triples $\langle p_\ell, p_r, X \rangle$, where $p_\ell$ and $p_r$ are distinct points in $S$ and $X \in \{L,R\}$. Note that $C$ has $\bigO(n^2)$ entries. Let $i=\max\{\ell,r\}$. The entry $C[p_\ell, p_r,L]$ is $\texttt{True}$ if and only if the segment $\overline{p_\ell p_{i+1}}$ does not cross any bipath $B$ on $S_i$ composed of two monotone $st$-paths $L$ and $R$ ending at points $p_\ell$ and $p_r$, if such a bipath exists, otherwise the value of $C[p_\ell, p_r,L]$ is irrelevant. Likewise, the entry $C[p_\ell, p_r,R]$ is $\texttt{True}$ if and only if the segment $\overline{p_r p_{i+1}}$ does not cross any bipath $B$ on $S_i$ composed of two monotone $st$-paths $L$ and $R$ ending at points $p_\ell$ and $p_r$, if such a bipath exists, otherwise the value of $C[p_\ell, p_r,R]$ is irrelevant. 

We show how to compute the entries $C[p_\ell, p_r,L]$, the computation of the entries $C[p_\ell, p_r,R]$ is done analogously. As discussed before, if $\ell>r$, then $C[p_\ell, p_r,L]=\texttt{True}$; this condition can be verified in $\bigO(1)$ time, hence in $\bigO(n^2)$ time over all entries of $C$. Assume now that $\ell<r=i$. A simple way of computing $C[p_\ell, p_r,L]$ would consist of verifying whether $\overline{p_\ell p_{r+1}}$ intersects any of the segments $\overline{q_\ell p_{\ell+1}},\overline{p_{\ell+1}p_{\ell+2}},\dots,\overline{p_{r-1}p_{r}}$. However, this would take $\Omega(r-\ell)$ time per entry, which would sum up to $\Omega(n^3)$ over all entries of $C$. Instead, for each fixed $\ell \in [n-2]$, we compute all the entries $C[p_\ell,p_r, L]$ with $r= \ell+1, \ell+2, \dots,n-1$ in overall $\bigO(n)$ time, as described below. This sums up to $\bigO(n^2)$ time over all the entries $C[p_\ell,p_r, L]$ of $C$ with $\ell=1,2,\dots,n-2$ and $r=\ell+1,\ell+2,\dots,n-1$.

Initialize a value $\alpha$ to the angle that is defined by a counter-clockwise rotation around $p_\ell$ of a horizontal ray starting at $p_\ell$ and directed rightward, so that the rotation stops when the ray passes through $p_{{\ell}+1}$. We now look at the values $j=\ell+1,\ell+2,\dots,n-1$ one by one. When we look at a value $j$, we compute the angle $\alpha_j$ that is defined by a counter-clockwise rotation  around $p_\ell$ of a horizontal ray starting at $p_\ell$ and directed rightward, so that the rotation stops when the ray passes through $p_{j+1}$. Two cases can happen. 
\begin{figure}[t!]
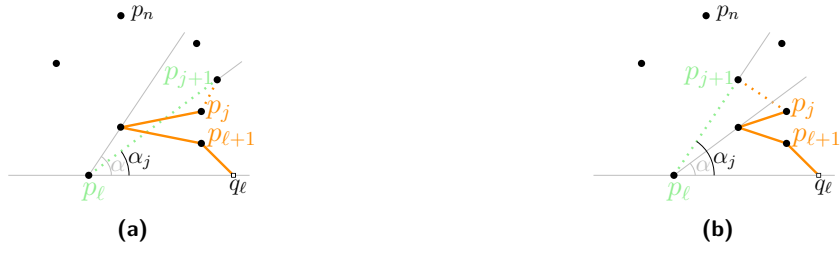

    \centering
    \begin{subfigure}[b]{0.45\textwidth}
    \centering
        \includegraphics[scale=.75, page=15]{img/HamiltonianCycle.pdf}
        \subcaption{}
    \label{fig:rotation-fail}
    \end{subfigure}
    \hfill
    \begin{subfigure}[b]{0.45\textwidth}
    \centering
        \includegraphics[scale=.75, page=16]{img/HamiltonianCycle.pdf}
        \subcaption{}
    \label{fig:rotation-success}
    \end{subfigure}
    \caption{Computation of the value of the entry $C[p_\ell, p_j,L]$. In (a) we have $\alpha_j\leq \alpha$, hence $C[p_\ell, p_j,L]=\texttt{False}$, while in (b) we have $\alpha_j> \alpha$, hence $C[p_\ell, p_j,L]=\texttt{True}$.}
    \label{fig:rotation}
\end{figure}
If $\alpha_j\leq \alpha$, as in \cref{fig:rotation-fail}, then we leave $\alpha$ unaltered and we set $C[p_\ell, p_j,L]=\texttt{False}$. Otherwise, that is, if $\alpha_j>\alpha$, as in \cref{fig:rotation-success}, then we set $\alpha$ to the value of  $\alpha_j$ and we set $C[p_\ell, p_j,L]=\texttt{True}$. 

Clearly, this computation takes $\bigO(1)$ per value~$j$, hence $\bigO(n)$ time for all the entries $C[p_\ell,p_r, L]$ with $r= \ell+1, \ell+2, \dots,n-1$, and thus  $\bigO(n^2)$ time over all the entries $C[p_\ell, p_r,L]$ of $C$. Concerning the correctness of the computed values, let $q_1,\dots,q_n$ be $n$ points such that, for $i=1,\dots,n$, the point $q_i$ has the same $y$-coordinate as $p_i$ and lies to the right of every point $p_j$ with $j=1,\dots,n$. It suffices to observe that the straight-line segment $\overline{p_\ell p_{j+1}}$ does not cross the polyline $(q_\ell,p_{\ell+1},p_{\ell+2},\dots,p_{j})$ if and only if a counter-clockwise rotation around $p_\ell$ of a horizontal ray starting at $p_\ell$ and directed rightward passes through all of $q_{\ell},p_{\ell+1},p_{\ell+2},\dots,p_{j}$ before passing through $p_{j+1}$. This is expressed by the condition $\alpha_j>\alpha$. As discussed before, assuming that a bipath $B$ on $S_j$ composed of two monotone $st$-paths ending at $p_{\ell}$ and $p_j$ exists, the straight-line segment $\overline{p_\ell p_{j+1}}$ crosses $B$ if and only if it crosses the polyline $(q_\ell,p_{\ell+1},p_{\ell+2},\dots,p_{j})$, from which the correctness of the computed entry values follows. 

We now turn our attention to the computation of the table $D$, which has $\bigO(n^2)$ size and allows us to test in $\bigO(1)$ time whether a bipath $B$ on $S_{i}$, with $i \in \{2,\dots,n-1\}$, is safe. 

The table $D$ is indexed by triples $\langle p_a, p_b, X \rangle$, where $p_a,p_b \in S$ with $a < b$ and $X \in \{L,R\}$. Each entry of $D$ contains a Boolean value $D[p_a,p_b, X]$ defined as follows. 

\begin{itemize}
    \item Suppose that $X = R$. Consider the rightmost ray $\rho(p_a,S_b \setminus S_a)$ starting from $p_a$ and passing through a point in $S_b \setminus S_a$. We denote by $\mathcal{R}(p_a,p_b)$ the open region of the plane strictly to the right of $\rho(p_a,S_b \setminus S_a)$ and strictly above the horizontal line  through $p_b$. 
Then, $D[p_a,p_b, R] = \texttt{True}$ if and only if $\mathcal{R}(p_a,p_b) \cap S \neq \varnothing$.
\item Next, suppose that $X = L$. Consider the leftmost ray $\ell(p_a,S_b \setminus S_a)$ starting from $p_a$ and passing through a point in $S_b \setminus S_a$. We denote by $\mathcal{L}(p_a,p_b)$ the open region of the plane strictly to the left of the ray $\ell(p_a,S_b \setminus S_a)$ and strictly above the horizontal line passing through $p_b$. 
Then, $D[p_a,p_b, L] = \texttt{True}$ if and only if $\mathcal{L}(p_a,p_b) \cap S \neq \varnothing$.
\end{itemize}

For each fixed $a \in [n-1]$, we show how to compute all the entries $D[p_a,p_b, R]$ with $b= a+1, a+2, \dots,n$ in overall $\bigO(n)$ time. This sums up to $\bigO(n^2)$ time over all the entries $D[p_a,p_b, R]$ of $D$ with $a=1,2,\dots,n-1$ and $b=a+1,a+2,\dots,n$. The computation of the entries $D[p_a,p_b, L]$ of $D$ is done symmetrically.

\begin{figure}[t!]
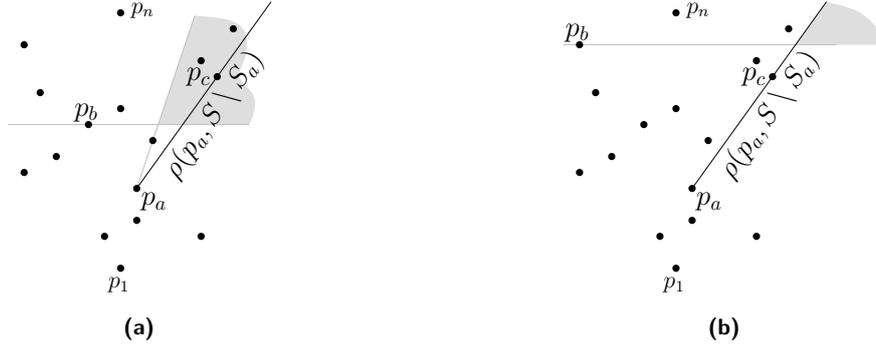

    \centering
    \begin{subfigure}[b]{0.45\textwidth}
    \centering
        \includegraphics[scale=.75, page=8]{img/HamiltonianCycle.pdf}
         \subcaption{}
         \label{fig:fast-computation-good}
    \end{subfigure}
    \hfill
    \begin{subfigure}[b]{0.45\textwidth}
    \centering
        \includegraphics[scale=.75, page=9]{img/HamiltonianCycle.pdf}
         \subcaption{}
         \label{fig:fast-computation-bad}
    \end{subfigure}
    \caption{(a) For any  $b\in \{a+1,a+2,\dots,c-1\}$, we have that $p_c$ is in $\mathcal{R}(p_a,p_b)$. (b) For any  $b\in \{c,c+1,\dots,n\}$, we have that $\mathcal{R}(p_a,p_b)$ is empty. Region $\mathcal{R}(p_a,p_b)$ is shaded gray.}
    \label{fig:fast-computation}
\end{figure}

We compute the point $p_c$ with $c>a$ such that the ray $\rho(p_a,p_c)=\rho(p_a,S \setminus S_a)$ is the rightmost among the rays starting from $p_a$ and passing through a point in $S\setminus S_a$. This can be done in $\bigO(n)$ time by inspecting the points $p_{a+1},p_{a+2},\dots,p_{n}$. Then, we set $D[p_a,p_b, R] = \texttt{True}$ for all the points $p_b$ with $b=a+1,a+2,\dots,c-1$ and $D[p_a,p_b, R] = \texttt{False}$ for all the points $p_b$ with $b=c,c+1,\dots,n$. Indeed, for any  $b\in \{a+1,a+2,\dots,c-1\}$, we have that $p_c$ is in $\mathcal{R}(p_a,p_b)$, since it is strictly above the horizontal line through $p_b$ (given that $b<c$) and strictly to the right of the ray $\rho(p_a,p_b)$ (given that $\rho(p_a,p_c)$ is the rightmost among the rays starting from $p_a$ and passing through a point in $S\setminus S_a$); see \cref{fig:fast-computation-good}. Also, for any  $b\in \{c,c+1,\dots,n\}$, we have that $p_c$ is in $S_b \setminus S_a$, and, by definition of $p_c$, no point is strictly to the right of the ray $\rho(p_a,p_c)$, hence $\mathcal{R}(p_a,p_b)$ is empty; see \cref{fig:fast-computation-bad}.

\end{itemize}

This concludes the description of the $\bigO(n^2)$-time computation of the tables $C$ and $D$. Due to these tables, the computation performed by the enumeration algorithm at each node of the search tree $T$ takes $\bigO(1)$ time. Indeed, consider a node $\mu$ of $T$ associated to a safe bipath~$B$ composed of two monotone $st$-paths ending at the points $p_{\ell}$ and $p_r$. Let $i=\max\{\ell,r\}$. By means of the value $C[p_{\ell},p_{r},L]$ and $D[p_r,p_{i+1},R]$, we can respectively test in $\bigO(1)$ time whether $B':=B\cup \overline{p_{\ell} p_{i+1}}$ is a bipath and, in case it is, whether it is safe. If $B'$ is a safe bipath, then the algorithm adds to $\mu$ a child corresponding to $B'$, and the traversal continues on that child. Once the traversal backtracks to $\mu$ again, or if $B'$ was not a safe bipath in the first place, by means of the values $C[p_{\ell},p_{r},R]$ and $D[p_\ell,p_{i+1},L]$, we can respectively test in $\bigO(1)$ time whether $B'':=B\cup \overline{p_r p_{i+1}}$ is a bipath and, in case it is, whether it is safe. If $B''$ is a safe bipath, then the algorithm adds to $\mu$ a child corresponding to $B''$, and the traversal continues on that child. Since the computation at each node takes $\bigO(1)$ time and since $T$ has $n$ levels, it follows that the algorithm's delay is in $\bigO(n)$.

\cref{prop:level-leaves,prop:leaf-cycle,prop:distinct-cycle,prop:all-cycles,prop:space-time} complete the proof of \cref{th:hamiltonian-enumeration}.

\section{Conclusions and Open Problems}

We addressed basic pointset embeddability problems for upward planar graphs. We proved that UPSE testing is \NP-hard even for planar $st$-graphs composed of internally-disjoint $st$-paths and for directed trees composed of directed root-to-leaf paths. 
For planar $st$-graphs, we showed that \textsc{UPSE Testing} can be solved in $O(n^{4k})$ time, where $k$ is the maximum $st$-cutset of $G$, and we provided an algorithm to enumerate all UPSEs of $G$ on $S$ with $\bigO(n)$ worst-case delay. We also showed how to enumerate all monotone polygonalizations of a given pointset with $\bigO(n)$ worst-case delay. We point out the following open problems.

\begin{itemize}
    \item Our NP-hardness proofs for \textsc{UPSE testing} use the fact that the points are not in general position. Given a directed tree $T$ on $n$ vertices and a set $S$ of $n$ points {\em in general position}, is it \NP-hard to decide whether $T$ has an UPSE on $S$?
    \item Can \textsc{UPSE testing} be solved in polynomial time or does it remain \NP-hard if the input is a {\em maximal} planar $st$-graph?
    \item We proved that \textsc{UPSE testing} for a planar $st$-graph is in XP with respect to the size of the maximum $st$-cutset of $G$. Is the problem in FPT with respect to the same parameter? Are there other interesting parameterizations for the problem?
    \item Let $S$ be a pointset and $\mathcal P$ be a non-crossing path on a subset of $S$. Is it possible to decide in polynomial time whether $\mathcal P$ can be extended to a polygonalization of $S$? A positive answer would imply an algorithm with polynomial delay for enumerating the polygonalizations of a pointset, with the same approach as the one we adopted in this paper for monotone polygonalizations.    
\end{itemize}

\bibliographystyle{splncs04}
\bibliography{bibliography}

\end{document}